\providecommand{\tabularnewline}{\\}
\providecommand{\algorithmname}{Algorithm}
\theoremstyle{plain}
\newtheorem{thm}{\protect\theoremname}
\theoremstyle{plain}
\newtheorem{assumption}[thm]{\protect\assumptionname}
\theoremstyle{plain}
\newtheorem{lem}[thm]{\protect\lemmaname}
\theoremstyle{plain}
\newtheorem{cor}[thm]{\protect\corollaryname}
\theoremstyle{plain}
\newtheorem*{thm*}{\protect\theoremname}
\definecolor{darkslateblue}{rgb}{0.28, 0.24, 0.55}
\setlist{nolistsep}
\providecommand{\assumptionname}{Assumption}
\providecommand{\corollaryname}{Corollary}
\providecommand{\lemmaname}{Lemma}
\providecommand{\theoremname}{Theorem}
\begin{document}
\begin{spacing}{1.0}
\title{Optimal Comprehensible Targeting\thanks{Email: \protect\href{mailto:wwz@wharton.upenn.edu}{wwz@wharton.upenn.edu}.
Earlier work on this project started with my dissertation from which
this project takes its title. I thank my dissertation advisors Sanjog
Misra, G\"unter J. Hitsch, Pradeep K. Chintagunta, Tengyuan Liang, and
Avner Strulov-Shlain for their encouragement and support. This paper
greatly benefited from discussions with Max H. Farrell, Yuexi Wang,
Karthik Srinivasan, James W. Kiselik, Olivia R. Natan, Kevin Lee,
Reuben Bauer, and Benedict Guttman-Kenney. I thank conference participants
at World Congress of the Econometric Society 2025, IMS ICSDS 2024,
ESIF on Economics and AI+ML 2024, ASA Joint Statistical Meetings 2024,
and NBER Summer Institute 2024, as well as seminar participants at
Columbia, Northwestern, Stanford, Berkeley, UCLA, University of Rochester,
Cornell, University of North Carolina, Duke, University of Pennsylvania,
Imperial College London, London Business School, Boston University,
University of Illinois, and Temple University for their helpful comments. }}
\author{Walter W. Zhang\\
{\normalsize\emph{The Wharton School, University of Pennsylvania}}}
\date{\today}
\maketitle
\begin{abstract}
Developments in machine learning and big data allow firms to fully
personalize and target their marketing mix. However, data and privacy
regulations, such as those in the European Union (GDPR), incorporate
a ``right to explanation,'' which is fulfilled when targeting policies
are comprehensible to customers. This paper provides a framework for
firms to navigate right-to-explanation legislation. First, I construct
a class of comprehensible targeting policies that is represented by
a sentence. Second, I show how to optimize over this class of policies
to find the profit-maximizing comprehensible policy. I further demonstrate
that it is optimal to estimate the comprehensible policy directly
from the data, rather than projecting down the black box policy into
a comprehensible policy. Third, I find the optimal black box targeting
policy and compare it to the optimal comprehensible policy. I then
empirically apply my framework using data from a price promotion field
experiment from a durable goods retailer. I quantify the cost of explanation,
which I define as the difference in expected profits between the optimal
black box and comprehensible targeting policies. Compared to the black
box benchmark, the comprehensible targeting policy reduces profits
by 7.5\% or 23 cents per customer.  

\vspace{0.5cm}

\begin{singlespace}
\noindent\emph{Keywords}: Interpretable AI, Targeting, Data and Privacy
Regulation
\end{singlespace}
\end{abstract}
\noindent\end{spacing}\thispagestyle{empty}

\newpage{}

\lhead{}

\rhead{}

\renewcommand{\headrulewidth}{0pt}\setstretch{1.15}

\section{Introduction}

\setcounter{page}{1}

Black box algorithms dominate marketing decisions today \citep{Katsov2017}.\footnote{A black box algorithm is one that users can observe the outcomes,
but the internal mechanisms remain opaque.} These algorithms are fast, personalized, and, most importantly, designed
to maximize profits for the firm. Modern algorithms apply profit-maximizing
marketing decisions at the individual level. At the extreme, customers
can face their own marketing mix. 

However, full personalization of the marketing mix may harm the firm.
Consider the following example in promotions management: two customers
go to check out but only one customer is given a 20\%-off promotion.
To justify the exclusion of the other customer, the firm's sales representative
cannot simply say it was profit-maximizing to do so. The representative
instead needs to provide the customer a comprehensible explanation
for why she did not receive the promotion. If a suitable explanation
cannot be found or understood, then the excluded customer may feel
slighted by the firm \citep{Dietvorst2022}.

More generally, customers and firms need explanations of algorithms.
Customers may desire an explanation either to understand the algorithm
\citep{Dietvorst2015}, or to learn from the algorithmic decision
\citep{VanOsselaer2000}\textemdash that is, customers want to know
why certain customers receive a promotion and the criteria for receiving
future promotions.

Recognizing such consumer needs as compelling interests, the General
Data Protection Regulation (GDPR) includes a ``right to explanation''
\citep{EuropeanCommission2016}: Firms need to fully explain their
algorithmic decisions to their consumers or they are subject to hefty
fines. As the AI Act \citep{EuropeanCommission2021} rolls out in
Europe and regulatory measures like the California Consumer Privacy
Act (CCPA) emerge in other jurisdictions, the scope of regulatory
oversight around right-to-explanation legislation is set to expand.
Beyond fulfilling statutory requirements, firms also need an explanation
of the algorithm to provide long-term brand equity by addressing customer
needs, making their algorithmic decisions easier to justify by human
representatives, and allowing themselves to self-diagnose their own
algorithmic decision making.

To provide the explanations that customers and firms need, firms can
adopt a sentence-length, comprehensible targeting policy, which must
be more than just a simple explanation of a black-box model's outcome.\footnote{The literature is divided as to whether local explanations of the
black box algorithms' outcomes are sufficient for regulators \citep{Edwards2017,Gillis2019}.} I argue a comprehensible policy must satisfy three conditions: it
must be \emph{transparent}, \emph{complete}, and \emph{conversational}.
Transparency means the targeting policy's logic is innately understandable
and easy to parse. Completeness means this transparent explanation
is globally identical to the policy being implemented; the explanation
is the algorithm, not a post hoc summary. Conversational means its
meaning is simple enough to be readily conveyed in day-to-day conversation.

Here is an example of a comprehensible policy: ``Target a customer
with a promotion if she has not bought in the last thirty days and
lives in Chicago.'' This policy is transparent because its logic
is expressed as a simple, human-readable sentence. It is also complete
because that sentence is the entire policy: there are no hidden weights
and everyone is given the same policy. It is conversational because
the sentence is concise: it only uses two conditional clauses.

This paper provides a framework for firms to design and analyze optimal
comprehensible targeting policies. Not only are these policies comprehensible
to customers, regulators, and the firm's own representatives, but
they are optimal in that they are profit-maximizing. My framework
allows firms to quantify the profit differences from implementing
optimal comprehensible policies as opposed to black box policies and
to compare the two targeting policies. There are three components
in my framework: (1) constructing a class of comprehensible policies,
(2) finding the optimal comprehensible policy, and (3) comparing it
to established the black box algorithms. 

First, I construct a class of comprehensible targeting policies. The
structure of the these comprehensible policies is motivated from the
philosophy and explainable AI literatures.\footnote{The philosophy literature notes that understanding is based on contrastive
causation and counterfactual inference \citep{Lipton1990}. These
two concepts are built into the constructed class of comprehensible
policies. The explainable AI literature suggests that sentences with
more than five constitutive clauses are not commonly understandable
and are not conversational \citep{Miller2019a}. } These comprehensible policies are targeting policies that consist
of a sentence formed from conditional clauses joined by logic operators
and I focus on shorter sentences to ensure they are conversational.
To be clear, this class of comprehensible policies is a possible class
of interpretable models that can be used in the framework and while
it is by no means the only definition of comprehensibility, it represents
a practical and reasonable one. I adopt this specific sentence-based
structure because it provides a clear, verifiable format that is transparent,
complete, and conversational. I further show how I can feature engineer
these clauses from standard database marketing datasets. 

Second, I find the optimal comprehensible policy that maximizes firm
profits. The components of the comprehensible policy over which I
optimize are the conditional clauses and the logic operators that
combine the clauses. I show how to solve the optimization problem
using brute force and greedy algorithms to learn the optimal policy
for a given sentence length. Then, I demonstrate how to conduct inference
around the expected profits generated from the optimal comprehensible
policy. 

Third, I evaluate the performance of the optimal comprehensible policy
against established black box targeting algorithms. This black-box
comparison group includes three approaches: (1) indirect methods that
focus on learning outcome variables, (2) direct methods that learn
treatment effects, and (3) policy learning methods that learn the
optimal targeting policy from the data \citep{Fernandez-Loria2023}.
I offer benchmarks for all three in the empirical application and
choose the best performing one for comparison. This benchmark represents
what a state-of-the-art firm would use it were unfettered by comprehensibility
constraints.

Using my proposed framework, I perform a cost analysis as firms adopt
comprehensible targeting policies to abide by right-to-explanation
legislation. I first compare whom the optimal black box and the optimal
comprehensible policies target. I then compute the profit difference
between the two policies. Since the comprehensible policy is less
expressive and personalized than the black box policy, the comprehensible
policy should be less profitable than the black box policy. I call
this difference in profits the \emph{cost of explanation} that the
firm faces if it adopts the comprehensible policy.

I implement this cost analysis with an empirical application in promotions
management. I use the dataset from \citet{Ni2012} as a case study
of the framework. The dataset includes a randomized control trial
of a \$10-off promotion randomly mailed to 176,961 households for
a durable goods retailer. The outcome of interest is sales during
the promotional period of December 2003. First, I show how to generate
and engineer different comprehensible policies from this RFM dataset.
Second, I find the optimal comprehensible policy and compare it to
the black box targeting policy. Third, I construct the black box benchmark
and find that combining policy learning with deep neural nets performs
the best in this application. These three steps let me document (1)
how the comprehensible targeting policy differs in whom it targets
compared to the black box targeting policy and (2) the firm's cost
of explanation if it implemented the optimal comprehensible policy
instead of the black box policy. 

Comparing the optimal comprehensible policy to the best-performing
black box policy, I find that the optimal comprehensible policy does
not systematically overtarget or undertarget; it appears to be limited
in its ability to capture customer heterogeneity due to the comprehensibility
constraint. In the empirical application, the optimal comprehensible
policy with three clauses targets those who spend a lot during the
holiday period but not in the spring, or who spent less than average
in prior holiday period.\footnote{More specifically, the optimal comprehensible policy is ``Target
a customer if she spent in the top half of spenders who spend during
Christmas over the last two years \emph{and} did not spend among the
top half of spenders in spring over the last two years \emph{or} is
among the bottom half of spenders during last year's holiday promotion.''} I further find that the cost of explanation is $23$ cents per person,
which constitutes a $7.5\%$ loss in profits compared to the optimal
black box policy and a $38$ cents per person (or $16\%$) gain in
profits over a blanket targeting policy. 

These results are of substantive interest. I provide an exercise where
I benchmark the GDPR penalty to the loss that the firm faces if it
moves away from the black box policy to comply with the right-to-explanation
legislation. In the empirical application, if I assume a basis of
10 million customers, the implied $23$-cent cost of explanation per
person leads to an expected loss of $\$2.3$ million per month. For
a GDPR penalty of around $\$20$ million that is enforced at a $10\%$
rate, then the expected penalty is $\$2$ million. The firm thus may
decide not to comply, depending on its assessment of the enforcement
rate and its appetite for regulatory risk. From the regulator's perspective,
it may consider raising the penalty or the enforcement rate to guarantee
compliance.\footnote{In fact, the EU\textquoteright s AI Act (2024) introduces higher fines
(maximum of 7\% of global revenue or 35 million Euros) when compared to
GDPR (2018) fines (maximum of 4\% of global revenue or 20 million Euros).}

Moving from the empirical application back to theory, I show that
directly forming the optimal comprehensible policy from the data is
more profitable than ex post projecting down the black box policy
to a comprehensible policy. The latter procedure is motivated by the
explainable AI (XAI) literature. Methods in XAI often provide a locally
approximative model of the black box in order to shed light upon the
black box's decisions \citep{Biran2017a,Miller2019a,Mothilal2019,Rai2020,Senoner2022}.
I show that the direct approach theoretically generates more profits
than the ex post approach and validate this using the empirical application.
I also show how I can conduct inference on the ex post comprehensible
policy by recentering the empirical process results from \citet{Kitagawa2018}.
Based on these results, firms should directly form the comprehensible
policy from the data instead of first finding the optimal black box
and then projecting it down.

This paper builds on many different extant literatures. The implementation
of algorithms in decision making and their effects are well documented
\citep{Kleinberg2015,Kleinberg2017}. In marketing, algorithmic decisions
are often made by forming optimal targeting policies in various domains
\citep{Ascarza2018,Chintagunta2023,Ellickson2022,Hitsch2018,Karlinsky-Shichor2019,Rossi1996,Simester2020,Smith2022,Yoganarasimhan2020,Zhang2022}
and the literature is reviewed in \citet{Rafieian2022}. The comprehensible
policy class builds on the interpretable AI literature where expert
systems and logic rules are designed to be interpretable by human
agents \citep{Angelino2018,Cawsey1991a,Cawsey1992a,Cawsey1993a,Rudin2019,Wang2017,Weiner1980a}. 

Recent work on interpretable and explainable AI in social science
has recently examined its technical performance, market impact, and
regulatory implications. In terms of performance, recent research
in marketing study how to incorporate interpretable structure into
black box models to aid their performance. \citet{Wang2022} build
a hybrid interpretable model that is shown to be more effective than
pure black-box methods for predicting marketing responses. \citet{Fong2021}
incorporate interpretable structure from music theory in a deep neural
network to help performance. \citet{Wang2025} demonstrate that adding
contrastive explanations to recommendation systems can improve recommendations.
In terms of market impact, \citet{Mohammadi2024} and \citet{Wang2023}
both analytically analyze the effects of AI transparency on firms,
consumers, and customer welfare. Focusing on regulation, \citet{Lambin2023}
provide a game-theoretic analysis of firm and regulator behavior under
right-to-explanation laws. 

Methodologically, this paper deviates from the interpretable AI literature
in that the comprehensible policy is much simpler and is conversational\textemdash a
firm's representative is able to fully state the targeting policy
in a sentence. Unlike typical interpretable models that offer a \textquotedbl dictionary\textquotedbl{}
of rules, requiring users to look up their specific case, my approach
provides a single, simple sentence that applies universally to all
users. This distinction is crucial: Regulators cite their inability
to diagnose complex algorithms and their explanations as a barrier
to oversight \citep{Edwards2017,Kleinberg2018}. My framework for
finding and evaluating optimal comprehensible policies provides a
potential solution for firms to comply with regulatory demands while
balancing profitability. 

Substantively, this paper provides a quantitative framework for estimating
how right-to-explanation laws impact profitability for individual
firms. Prior work has theorized about regulatory impact on firms;
I show precisely how to measure the cost of implementing comprehensible
targeting policies and how firms can optimize these policies under
regulatory constraints. This quantification offers concrete, measurable
trade-offs, with my empirical application demonstrating a profit loss
of $7.5\%$ for a targeted marketing campaign. 

The rest of the paper is organized as follows: Section \ref{sec:Framework}
provides an overview of the methodology and sets up the mathematical
notation. Section \ref{sec:Comprehensible-class} constructs a class
of comprehensible targeting policies and Section \ref{sec:Optimal-comprehensible-policy}
shows how to find the optimal comprehensible policy. I show that generating
a comprehensible policy from projecting down a black box is less profitable
than learning the policy directly from the data in Section \ref{sec:Ex-post-policy}.
I provide the empirical application in Section \ref{sec:Empirical-application}:
in Section \ref{subsec:Targeting-Policy-Difference}, I document the
differences in whom the black box and the optimal comprehensible targeting
policy target and in Section \ref{subsec:Targeting-Policy-Profit-Difference},
I quantify the cost of explanation. I then discuss what firms should
consider when deciding to implement comprehensible policies in Section
\ref{sec:Discussion} and quantify the impact of GDPR's right-to-explanation
regulation on the firm in the empirical application in Section \ref{subsec:GDPR-Calibration}.
I conclude in Section \ref{sec:Conclusion}.

\section{Framework\label{sec:Framework}}

In this section, I provide an overview of the general methodology
of forming and evaluating comprehensible policies. Figure \ref{fig:Methodological-overview}
summarizes the technical framework, which I unpack in Section \ref{subsec:Framework-overview}.
I introduce the mathematical framework in Section \ref{subsec:Mathematical-framework}
and discuss the three main approaches for learning targeting policies
in Section \ref{subsec:Learning-optimal-targeting}. A key methodological
theme for find both the black box and optimal comprehensible policies
is policy learning, where the optimal targeting policy is directly
learned from the data by maximizing profits.

\subsection{Framework overview \label{subsec:Framework-overview}}

Figure \ref{fig:Methodological-overview} illustrates the perceived
trade-off between comprehension and expected short-term profits. Comprehension
can be considered as $1/(\text{Model Complexity})$ and can be made
mathematically rigorous by using the targeting policy's Vapnik-Chervonenkis
dimension or Rademacher complexity as a proxy for model complexity.

As targeting policies become more comprehensible, they become less
personalized and cannot capture customer heterogeneity as well as
incomprehensible black box methods. Each dot on the figure represents
a specific targeting policy. Targeting policies that are simple enough
to be explained in a conversation are to the right of the dashed line.
For example, a blanket targeting policy would be very comprehensible
and conversational but not profitable, and it would be represented
by a point on the bottom right side of the figure near the horizontal
axis.

The proposed methodological framework evaluates and compares comprehensible
targeting policies to black box targeting policies. There are three
components for the framework: (1) constructing a class of comprehensible
policies, (2) finding the optimal profit-maximizing comprehensible
policy, and (3) forming the black box policy benchmark. I now describe
the three pieces to form the framework while using Figure \ref{fig:Methodological-overview}
as a guide.

First, I construct a class of comprehensible policies ($\mathcal{F}_{\text{Comp}}$)
in Section \ref{sec:Comprehensible-class} and represent it with the
red curve on the right side of the figure. I show how to generate
the clauses for the comprehensible policies from marketing data and
then form the policies. For this function class of sentences, the
Vapnik-Chervonenkis dimension is determined by the number of clauses
and represents the model complexity (Appendix Section \ref{sec:Decision-Trees-and-Sentences}).
To trace out the red curve, a targeting policy ``Target a customer
if she has not bought in the last thirty days'' would be a point
on the very right side of the red curve as it is quite comprehensible
but is limited in its profitability. The targeting policy ``Target
a customer if she has not bought in the last thirty days and lives
in Chicago'' will also be on the curve but is to the left of the
former example. Since the sentence has two clauses, it will be less
comprehensible but more profitable. The red curve also trends down
after a point to represent overfitting; a targeting policy that contains
twenty clauses can overfit and is also not conversational.

Second, I show how to optimize over this class of comprehensible policies
in Section \ref{sec:Optimal-comprehensible-policy} by using brute
force and greedy algorithms for a given length of clauses. I find
the point on the red curve that generates the highest profits and
denote this targeting policy as $d_{\text{Comp}}^{*}(x)$ which generates
expected profits $\Pi_{\text{Comp}}^{*}$. I also show that finding
the optimal comprehensible policy directly from the data is more profitable
than finding it by projecting down the black box in Section \ref{sec:Ex-post-policy}. 

Third, I focus on the class of black box policies ($\mathcal{F}_{BB}$)
and represent this class of functions on the left side of the figure
with the blue curve. Black box functions, such as deep neural nets,
possess a uniform approximation property that allows them to approximate
any function \citep{Goodfellow-et-al-2016}. As the blue curve moves
from right to left it trades off profitability for comprehensibility.
For example, a DNN with only one hidden layer with a handful of nodes
would be a point on the right side of the blue curve while a larger
DNN would be on the left side of the curve. The curve trends down
after a certain point to represent overfitting when the DNN is too
complex. The optimal black box policy function $d_{BB}^{*}(x)$ is
represented as a point on the blue curve in the figure and it generates
expected profits of $\Pi_{BB}^{*}(x)$.

Using the empirical application, I then compare the two targeting
policies from the optimal black box $d_{BB}^{*}(x)$ and the optimal
comprehensible policy $d_{\text{Comp}}^{*}(x)$ to examine the differences
in the targeting policies in Section \ref{subsec:Targeting-Policy-Difference}.
I then document the cost of explanation, or the difference in profits
from the optimal black box to the optimal comprehensible policy ($\Delta\Pi=\Pi_{BB}^{*}-\Pi_{\text{Comp}}^{*}$),
in Section \ref{subsec:Targeting-Policy-Profit-Difference}.

This framework allows me to substantively evaluate the firm's economic
losses when it follows right-to-explanation laws for its targeting
policy. I compare the cost of explanation to the expected GDPR penalty
and this exercise allows me to evaluate the penalty's impact on the
firm in the empirical application. I discuss how these losses play
a role for managerial decision making in Section \ref{sec:Discussion}.

\subsection{Mathematical framework \label{subsec:Mathematical-framework}}

I define $(X,W,Y)$ as the data tuple of the covariates, the treatment,
and the outcome variable respectively. I observe this tuple for each
individual $i$ and assume the data is $i.i.d.$ for $n$ observations.
I consider a binary treatment $W\in\{0,1\}$ and the data $X$ has
dimension $p$. I further define $Y(1),Y(0)$ as the potential outcomes
for the binary treatment with observed outcome variable $Y=WY(1)+(1-W)Y(0)$.

To provide a concrete example, I can think of $Y$ as sales, $W$
as a promotional mailing, and $X$ as consumer characteristics. The
firm cares about maximizing expected profits from its promotions management
campaign and profits are a function of the sales.

Consider the standard inverse propensity weighted profit estimator
for a given policy function $d:\mathbb{R}^{p}\to\{0,1\}$ that maps
the covariates to the targeting rule,
\begin{equation}
\hat{\Pi}(d)=\sum_{i=1}^{n}\frac{W_{i}}{e(x_{i})}\pi_{i}(1)d(x_{i})+\frac{1-W_{i}}{1-e(x_{i})}\pi_{i}(0)(1-d(x_{i})).\label{eq:IPWE-Profits}
\end{equation}
This profit estimator is an unbiased estimator for the profits from
the targeting policy 
\begin{equation}
\Pi(d)=\sum_{i=1}^{n}\pi_{i}(1)d(x_{i})+\pi_{i}(0)(1-d(x_{i}))\label{eq:True-Profits}
\end{equation}
 where $\pi_{i}(1)=mY_{i}(1)-c,\pi_{i}(0)=mY_{i}(0)$ are the counterfactual
profit values when individual $i$ is respectively targeted and not
targeted, $e(x_{i})=P(W_{i}\mid X=x_{i})$ is the propensity score,
and $m,c$ respectively represent the profit margins and the cost
of issuing the treatment.\footnote{The inverse propensity weighted profit estimator is also known as
the Horvitz-Thompson profit estimator \citep{Imbens2015}, and $E[\hat{\Pi}]=\Pi$
under the standard assumptions of unconfoundedness, overlap, and SUTVA
\citep{Hitsch2018}.}

From Equation \prettyref{eq:True-Profits}, I see that the optimal
policy function is 
\begin{equation}
d^{*}=\mathbf{1}\{\pi_{i}(1)>\pi_{i}(0)\}=\mathbf{1}\{m(Y_{i}(1)-Y_{i}(0))>c\}\label{eq:True-Policy-Opt}
\end{equation}
 in which individual $i$ is targeted if and only if her counterfactual
profits are higher under that treatment assignment, and where $\mathbf{1}\{\cdot\}$
represents the indicator function.

I do not observe $Y_{i}(1),Y_{i}(0)$ for each individual $i$ because
of the fundamental problem of causal inference, so I cannot form $Y_{i}(1)-Y_{i}(0)$
in Equation \ref{eq:True-Policy-Opt}. Instead, I want to find a representation
for the difference in the potential outcomes. 

I make a few assumptions to do so: First, I have a structural assumption
that customers have a utility function $u_{i}=\alpha(x_{i})+\beta(x_{i})W_{i},$
which maps to the outcome variable with function $Y_{i}=G(u_{i})+\epsilon_{i}$.
I further assume a linear function for $G(\cdot)$, $G(u_{i})=u_{i}$.
The assumption simplifies the problem to 
\begin{equation}
Y_{i}=\alpha(x_{i})+\beta(x_{i})W_{i}+\epsilon_{i},\label{eq:Outcome-Equation}
\end{equation}
where I can interpret $\alpha(x_{i})$ as the baseline sales if customer
$i$ was not given a treatment and $\beta(x_{i})$ as the incremental
effect of issuing the treatment on sales for customer $i$. I allow
the intercept term $\alpha(x_{i})$ and the coefficients $\beta(x_{i})$
to depend on the individual's pretreatment covariates, which provide
both heterogeneity in the coefficients and the ability to forecast
$\alpha(x_{i}),\beta(x_{i})$ for a customer with given $x_{i}$.

I make three additional assumptions to ensure I can recover $\beta(x_{i})$
as a causal effect of the treatment $W_{i}$ \citep{Imbens2015}.
These assumptions are typical in experimental settings. The first
two are the unconfoundedness and the overlap assumptions. Under a
properly run randomized control trial (RCT), these two assumptions
are satisfied. I then assume the stable unit treatment value assumption
(SUTVA) holds. This assumption effectively implies that there are
no spillover effects from sharing the promotional mailing in the running
example. I formally state the assumptions below.
\begin{assumption}
\label{assu:Unconfoundedness}(Unconfoundedness) The potential outcomes
$Y_{i}(1),Y_{i}(0)$ are statistically independent of the treatment
variable $W_{i}$ which is represented formally as $\{Y_{i}(1),Y_{i}(0)\}\perp W_{i}$.
\end{assumption}

\begin{assumption}
\label{assu:Overlap}(Overlap) The propensity score $e(x_{i})=P(W_{i}\mid X=x_{i})$
is bounded between zero and one which is represented formally as $0<e(x_{i})<1$.
\end{assumption}

\begin{assumption}
\label{assu:SUTVA}(SUTVA) The potential outcomes for any individual
do not vary with the treatment assignments for other individuals.
For each individual, and there are no different forms of treatments
that lead to different potential outcomes.
\end{assumption}

These three assumptions allow me to identify the conditional expectation
of potential outcomes and then map them to the observed data. I have
that
\begin{align*}
\alpha(x_{i}) & =E[Y_{i}(0)\mid X=x_{i},W=0]=E[Y_{i}\mid X=x_{i},W=0]\\
\alpha(x_{i})+\beta(x_{i}) & =E[Y_{i}(1)\mid X=x_{i},W=1]=E[Y_{i}\mid X=x_{i},W=1]\\
\beta(x_{i}) & =E[Y_{i}(1)\mid X=x_{i},W=1]-E[Y_{i}(0)\mid X=x_{i},W=0]\\
 & =E[Y_{i}\mid X=x_{i},W=1]-E[Y_{i}\mid X=x_{i},W=0]
\end{align*}
and $\beta(x_{i})$ is the heterogeneous treatment effect (HTE) or
the conditional average treatment effect (CATE) for treatment $W_{i}$. 

Circling back to the optimal policy $d^{*}$ in Equation \prettyref{eq:True-Policy-Opt},
I can learn the optimal policy function as a function of covariates
$x_{i}$, 
\begin{align}
d^{*}(x_{i})=d^{*}(\beta(x_{i})) & =\mathbf{1}\{m(E[Y_{i}(1)\mid X=x_{i},W=1]-E[Y_{i}(0)\mid X=x_{i},W=0])>c\}\label{eq:Policy-Function}\\
 & =\mathbf{1}\{\beta(x_{i})>\frac{c}{m}\},\nonumber 
\end{align}
where I use the conditional expectation of the potential outcomes
in place of the potential outcomes. 

\subsection{Learning the optimal targeting policy \label{subsec:Learning-optimal-targeting}}

Recent approaches to form the optimal targeting policy take a three-step
approach, a two-step approach, or use policy learning. The three-step
approach first estimates the conditional expectation functions $E[Y_{i}(1)\mid X,W=1]$
and $E[Y_{i}(0)\mid X,W=0]$ using regressions, then forms the treatments
effects $\hat{\beta}(x)=\hat{E}[Y_{i}(1)\mid X,W=1]-\hat{E}[Y_{i}(0)\mid X,W=0]$,
and lastly constructs the optimal policy $\hat{d}^{*}(x_{i})=\mathbf{1}\{\hat{\beta}(x_{i})>\frac{c}{m}\}$
following the plug-in rule (Equation \ref{eq:Policy-Function}). The
two-step approach first estimates the heterogeneous treatment effect
$\hat{\beta}(x)$ and then constructs the optimal policy $\hat{d}^{*}(x_{i})=\mathbf{1}\{\hat{\beta}(x_{i})>\frac{c}{m}\}$
from the plug-in rule. Policy learning learns the optimal policy $d^{*}(x_{i})$
in one-step from the data. 

Black box approaches exist for all three procedures. Three-step approaches
model conditional expectation functions using Random Forests, Lasso,
or Deep Neural Networks \citep{Hitsch2024}. Two-step approaches directly
model treatment effects, typically with Causal Forests \citep{Athey2019}.
Policy learning methods either use simple models like linear threshold
rules in \citet{Kitagawa2018} or construct a Policy Tree by projecting
Causal Forest estimates onto decision trees \citet{Athey2021}.

The theoretical foundation for these policy learning methods relies
on complexity constraints. \citet{Kitagawa2018} and \citet{Athey2021}
derive minimax regret rates for policy functions, restricting complexity
to ensure the square root of the Vapnik-Chervonenkis (VC) dimension
grows no faster than $\sqrt{n}$, to follow the fundamental theorem
of statistical learning \citep{Vapnik2000}. In practice, this constrains
policies to linear combinations or shallow trees of covariates. More
recent work replaces conservative minimax rates with semiparametric
inference, enabling more flexible functional forms. For example, \citet{Zhang2024}
uses deep neural networks to represent the policy function $\hat{d}^{*}(x)$
and conducts inference around optimal targeting profits.

In my framework, I benchmark these three black box approaches to establish
an optimal unconstrained baseline. The best-performing black box represents
the state-of-the-art targeting policy a firm would deploy absent comprehensibility
constraints, which provides a reference point for quantifying the
cost-of-explanation requirements.

\section{Forming a comprehensible class of policies\label{sec:Comprehensible-class}}

I now define comprehensible policies as policies expressible in natural
language sentences. I denote this class of policies as comprehensible
targeting policies. Here is an example of a comprehensible policy:
\[
\text{\textquotedblleft Target customer if she lives in Chicago and she has not bought in the last thirty days.\textquotedblright\ }
\]
The sentence has two conditional clauses ``if she lives in Chicago''
and ``{[}if{]} she has not bought in the last thirty days'' that
are linked by the ``and'' logic operator. Thus the customer will
be targeted if both clauses are true and will be not targeted otherwise.
This class of comprehensible policies is denoted by $\mathcal{F}_{\text{Comp}}$.

I describe what makes a targeting policy comprehensible in Section
\ref{subsec:What-is-comprehensibility}, construct the comprehensible
policies class in Section \ref{subsec:Comprehensible-policies}, and
show how I can engineer the clauses from standard recency, frequency,
and monetary (RFM) marketing data in Section \ref{subsec:Generating-clauses}.

\subsection{What is comprehensibility?\label{subsec:What-is-comprehensibility}}

The sentence, \textquotedblleft Target customer if she lives in Chicago
and she has not bought in the last thirty days\textquotedblright{}
is comprehensible in the sense it is \emph{transparent}, \emph{complete},
and \emph{conversational}. For transparency, sentences are the natural
medium that we communicate in. The firm can state the sentence to
the customer or regulator, and the sentence is easy to explain and
parse. The number of clauses captures the complexity of the targeting
policy and a sentence with more unique clauses will necessarily be
more complex. For completeness, the sentence is the exact targeting
policy being implemented by the firm; there is a one-to-one mapping
between the explanation provided to the implemented policy. 

While transparency might also suggest conversational, I emphasize
the conversational requirement to ensure these comprehensible policies
remain easy to parse. Drawing on \citet{Miller2019a}, I restrict
the complexity of the policy to be at most five clauses because longer
explanations can be difficult to parse. While a sentence with more
than five clauses may be grammatically correct, it becomes difficult
for a sales representative to articulate it to the customer. To operationalize
this conversational requirement, I mainly focus on comprehensible
targeting policies with at most five clauses. 

The comprehensible policies are constructed to embed contrastive explanations
\citep{Lipton1990}. The explainable AI literature uses the idea of
contrastive explanations as a core component of the explanation itself
\citep{Biran2017a,Halpern2005b,Halpern2005c,Miller2019a,Mothilal2019}.\footnote{In western philosophy, Spinoza holds the strongest position about
explanation, namely that everything is part of a causal chain of explanation.
While some explanations are difficult to find, Spinoza equates denying
these explanations with seeking refuge in ``the sanctuary of ignorance''
\citep{BSEthics}. This account refutes contrastive explanation: each
explanation is already determined by others. As a contemporary Spinoza
commentator writes, ``our place in the world is simply the way in
which we are explained by certain things and can serve to make intelligible\textemdash i.e.,
explain\textemdash certain other things'' \citep{DellaRocca2008}.} In the running example, a contrastive explanation means that since
customers are only targeted if they live in Chicago and have not bought
in the last thirty days, the negation of either conditional clause\textemdash ``if
they do not live in Chicago'' or ``have bought in the last thirty
days''\textemdash implies the customer would have not been targeted.
By providing the customer with the targeting rule and counterfactual
states in which the customer would not be targeted, the firm is able
to explain the comprehensible targeting in a manner that the customer
can understand.

Fundamentally, the choice of this class of comprehensible policies
is motivated by the emphasis on \emph{transparent}, \emph{complete},
and \emph{conversational} targeting policies that embed contrastive
explanations. This class of policies deviates from those proposed
in the interpretable AI literature, which suggests constructing long
rule lists \citep{Angelino2018,Rudin2019}, expansive decision trees
\citep{Weiner1980a}, or large flow charts \citep{Cawsey1992a}.
As a result, the comprehensible policy class will be simpler than
those used in interpretable AI: Large rule lists and decisions trees
may be transparent and complete, but they are not conversational.

\subsection{Comprehensible policies \label{subsec:Comprehensible-policies}}

I consider the class of sentences that consistent of clauses linked
by logic operators as the proposed class of policies. I call this
class of models \emph{comprehensible policies}. In essence, I want
to capture targeting rules that can be represented as a coherent and
grammatically correct sentences which say, 

\begin{align}
\text{\textquotedblleft Target if customer~} & \text{\ensuremath{i} has \textbf{this}\textquotedblright}\nonumber \\
\text{\textquotedblleft Target if customer~} & \text{\ensuremath{i} has \textbf{this} \emph{and} \textbf{that}\textquotedblright}\label{ex:Comprehensible-Sentence}\\
\text{\textquotedblleft Target if customer~} & \text{\ensuremath{i} has \textbf{this} \emph{or} \textbf{that} \emph{and} \textbf{not this}\textquotedblright}\nonumber 
\end{align}
 where ``\textbf{this}'' and ``\textbf{that}'' are two conditional
clauses described by covariates and are linked by the logic operators
``\emph{and},'' \emph{``or},'' and ``\emph{xor}''.\footnote{The operator ``xor'' represents the exclusive or which captures
A or B but not A and B.} The clauses can be negated so ``\textbf{this}'' can be formed into
``\textbf{not this}'' by prefixing the clause with the \emph{``not''}
operator.\footnote{The running example, ``Target customer if she lives in Chicago and
she has not bought in the last thirty days'' has two total conditional
clauses and they are linked by the logic operator ``and.'' The conditional
clauses themselves are ``she lives in Chicago'' and ``she has not
bought in the last thirty days.''} In this setup, the complexity of the explainable policy is defined
as the number of conditional clauses used and is denoted as $\ell$.

While comprehensible policies are simple to state, optimizing over
the sentences can be combinatorially difficult. If there are $k$
distinct possible clauses and three logic operators for a sentence
of length $\ell$ (that uses $\ell$ total clauses), then the total
possible combinations for that sentence is $3^{\ell-1}(2k)^{\ell}$,
which is exponential in the number of possible clauses. 

The comprehensible policies can be expressed in the most general form
in logic trees and decision trees. Logic trees take in binary covariates
and generate a sentence that uses the covariates as clauses and links
them using logic operators \citep{Schwender2010}. The elements of
the tree are the binary covariates and the logic operators are \emph{and},
\emph{or}, and \emph{not}. The logic trees can then be collapsed into
a sentence of the form in Example \ref{ex:Comprehensible-Sentence}.
Comprehensible policies are simpler versions of decisions trees and
a comprehensible policy with $\ell$ clauses can be represented by
a decision tree of depth $\ell$ but not vice versa.\footnote{Appendix Lemma \ref{lem:Comprehensible-Sentence-Decision-Tree} maps
comprehensible policies to decision trees and Appendix Corollary \ref{cor:Tree-to-Sentence}
shows that not all decision tree of depth $\ell$ can be mapped to
a comprehensible policy with $\ell$ clauses.} The procedure to generate corresponding decision trees from a comprehensible
policy is described in Appendix Section \ref{sec:Decision-Trees-and-Sentences}.

\subsection{Generating clauses \label{subsec:Generating-clauses}}

Comprehensible policies require clauses with binary values, and I
show how I can generate these clauses from standard RFM marketing
data in this section. The clauses depend on the data available to
the firm, and I assume the data are of standard tabular form in which
each column of the data is itself comprehensible. In the RFM setting,
a data column that describes sales in the past 12 months is assumed
to be comprehensible to mangers, customers, and regulators. 

Binary covariates in the data do not require further processing since
they can be represented by clauses directly. For example, a binary
covariate that is 1 if a customer is ``a new user'' and 0 otherwise
is represented by the clause ``is a new user.''

Categorical covariates are expanded to binary covariates and then
turned into the clauses. For example, the type of mobile phone that
a customer has can be converted into binary variables (``has an iPhone,''
``has an Android,'' and ``has a flip phone'') and these are converted
into clauses directly. The clauses from the categorical variables
are more expressive, as a clause like ``does not have a flip phone''
would capture the customer having either an ``iPhone'' or an ``Android''. 

For continuous covariates, I discretize them into three bins that
are each represented by a binary indicator. I implement this discretization
with RFM data in mind. To provide a concrete example, Figure \ref{fig:Generating-Clauses-from-Continuous-Variables}
shows the unconditional distribution for the past November sales covariate
from the empirical application in the upper panel and the conditional
distribution for nonzero past November sales in the lower panel. I
first note that $97\%$ of the observations are zero so I first construct
the clause ``has \emph{zero} past November sales.'' 

I then look at the conditional distribution of past November sales
for the customers that spend during November in the lower panel. The
median value is $\$129.99$ so I classify all customers above this
value to be ``high'' and the customers below this value and above
zero to be ``low.'' Their respective clauses will be ``has \emph{low}
past November sales among spenders'' and ``has \emph{high} past
November sales among spenders.'' I follow this procedure in the empirical
application for the RFM dataset. In other settings, researchers can
construct three clauses of low, medium, or high values by cutting
up the continuous variable at the empirical quantiles.\footnote{A more general approach can also discretize the continuous covariates
into smaller bins based on deciles or more precise quantiles but these
will significantly expand the total number of clauses to optimize
over.}

Naturally, data scientists can transform the data into different representations
or embeddings, but I consider the transformed variables largely to
not be comprehensible. At an extreme, a data scientist can estimate
heterogeneous treatment effects $\hat{\beta}(x_{i})$ using a black
box method. Then a sentence would naturally recover the optimal policy
function ($d^{*}(x_{i})$) that says, ``Target customer if she has
$\hat{\beta}(x_{i})>c/m$.'' Because the process of attaining $\hat{\beta}(x_{i})$
is opaque, this targeting policy is not transparent and thus not comprehensible.

\subsection{Why use the class of sentences?}

Sentences represent one of many possible comprehensible policy classes.
The interpretable AI literature offers alternatives ranging from decision
trees to lists of decision rules \citet{Rudin2019}. The critical
distinction is that sentences are \emph{transparent}, \emph{complete},
and \emph{conversational}. Other interpretable AI models typically
do not satisfy all three criterion.

For example, enumerating an entire dictionary of decision rules offers
a global explanation but is not conversational. Conversely, while
tracing an individual\textquoteright s path in the dictionary is conversational,
it yields only a local explanation rather than a global one. In contrast,
a single sentence is conversational and applies universally to all
customers, creating a one-to-one mapping between the explanation and
the implemented policy.

This global simplicity has important regulatory implications. Given
ongoing legal debate about what explanations satisfy right-to-explanation
requirements \citep{Edwards2017}, simple sentences should more readily
achieve compliance than complex decision trees or lists of decision
rules. Consequently, sentence-based policies represent a lower bound
on both profitability and regulatory action. Adding complexity through
decision trees or rule dictionaries may increase targeting profits
but simultaneously raises regulatory hurdles, as more complex explanations
face greater scrutiny. If even simple sentences (the most likely to
satisfy regulators) cannot achieve profitable compliance, then firms
face a tension between the profit gains from increasing complexity
and the regulatory costs it imposes.

\section{Optimal comprehensible policy\label{sec:Optimal-comprehensible-policy}}

I now develop methods to find optimal comprehensible policy among
the class of sentence-based policies from in Section \ref{sec:Comprehensible-class}.
I denote the class of comprehensible policies as $\mathcal{F}_{\text{comp}}^{\ell}$
and now make the dependence on the number of clauses $\ell$ in the
sentence explicit. Recall that $\ell$ represents the complexity of
the comprehensible policy: A sentence with more clauses can provide
a finer partition of the customer base. For example, $\mathcal{F}_{\text{comp}}^{2}$
represents this class of comprehensible policies that are spanned
by sentences with $\ell=2$ clauses. 

I now focus on optimizing over this class of policies for a given
$\ell$ value using policy learning. Specifically, I use the direct
empirical welfare maximization framework \citep{Kitagawa2018} where
I directly find the optimal comprehensible policy $d_{\text{comp}}^{*}(x)\in\mathcal{F}_{\text{comp}}^{\ell}$
that maximizes the sample profit estimator in Equation \ref{eq:IPWE-Profits},
\begin{equation}
d_{\text{comp}}^{*}(x)=\underset{d'\in\mathcal{F}_{\text{comp}}^{\ell}}{\arg\max}\ \hat{\Pi}(d').\label{eq:d_comp_opt}
\end{equation}
I propose two procedures that directly maximize profits over this
class of comprehensible policies. The first is brute force optimization,
which guarantees the globally optimal solution but can be computationally
intensive. The second is a greedy algorithm that is computationally
tractable but may result in a locally optimal solution.\footnote{While I develop brute force and greedy algorithms for this targeting
problem, related optimization challenges appear in other domains.
\citet{Hauser2010} use machine learning methods to identify disjunction-of-conjunction
decision rules, which are a similar logical structure, to predict
consumer consideration sets.} I then provide inference around the optimal comprehensible policy
by leveraging results from \citet{Kitagawa2018}.

\subsection{Brute force algorithm\label{subsec:Brute-force-optimization}}

I first propose the brute force optimization approach where I first
enumerate over all possible comprehensible policies and then choose
the one that yields the highest expected profits. Algorithm \ref{alg:Brute-force-optimization}
details the the brute force algorithm's steps.

\begin{algorithm}
\caption{Brute force optimization \label{alg:Brute-force-optimization}}

\textbf{Setup}: Number of clauses $\ell$:
\begin{enumerate}
\item Discretize the $p$ covariates into $q$ pieces to get the clauses
\item Combinatorially iterate through all targeting policies combinations
of the $pq$ clauses and logic operators
\item Choose the policy with $\ell$ clauses that maximizes profits
\end{enumerate}
\end{algorithm}

The problem of finding a $\ell=3$ clause sentence has the structure:
\[
\text{Target if \{A\} <and/or/xor> \{B\} <and/or/xor> \{C\}}
\]
where \{A\}, \{B\}, \{C\} are the clauses in the sentence and <and/or/xor>
are the logic operators. To find the optimal comprehensible policy
with three clauses, the brute force approach would first enumerate
all possible clauses in \{A\}, all possible clauses in \{B\}, and
all possible clauses in \{C\} (and those in \{not A\}, etc.) as well
as the possible logic operators, <and/or/xor>, between the clauses.
This approach is computationally intensive because the number of all
possible combinations grows exponentially in the total number of clauses
$\ell$. For $p$ covariates discretized into $q$ candidate clauses
this leads to $3^{\ell-1}(2pq)^{\ell}$ different combinations. To
put that number into context, fifty variables discretized into three
candidate clauses each has $243$ million different combinations to
search over for a three-clause sentence.

Practically, I can use the brute force algorithm to enumerate all
possible targeting policies when $\ell\leq2$ and the dataset itself
is not too large. I do so in Appendix Section \ref{sec:Bounding-the-greedy-algorithm}
to provide a comparison to the greedy algorithm's solution for the
empirical application.

\subsection{Greedy algorithm\label{subsec:Greedy-algorithm}}

Since the brute force algorithm is not computationally tractable in
many scenarios, I propose a greedy version of the algorithm.\footnote{Greedy algorithms are commonly used in the marketing literature \citep{Lilien1992}
and also used in estimating decisions trees in the statistics literature
\citep{Breiman1984}.} The algorithm is computationally feasible but may only find a local
optimal solution that generates lower profits than that of the globally
optimal solution. The greedy algorithm's profits can be viewed as
a lower bound of the profits from the globally optimal comprehensible
policy; I provide worst case profit bounds for the greedy algorithm
in Appendix Section \ref{subsec:Worst-case-bounds}. 

\begin{algorithm}
\caption{Greedy algorithm \label{alg:Greedy-algorithm}}

\textbf{Setup}: Number of clauses $\ell$:
\begin{enumerate}
\item Discretize the $p$ covariates into $q$ pieces to get the clauses
\item Find the single best clause that maximizes profits
\item For $l\in\{2,\ldots\ell\}$:
\begin{enumerate}
\item Iterate all clause and logic operator combinations while holding the
$l-1$ clauses and logic operators fixed
\item Choose the combination that maximizes the profits
\end{enumerate}
\end{enumerate}
\end{algorithm}

To outline the greedy algorithm (Algorithm \ref{alg:Greedy-algorithm}),
consider the case of finding a $\ell=3$ clause sentence that has
the structure: 
\[
\text{\ensuremath{\text{Target if }\underset{\ensuremath{(1)}}{\underbrace{\text{\{A\}}}}\underset{(2)}{\underbrace{\text{<and/or/xor> \{B\}}}}~\underset{(3)}{\underbrace{\text{<and/or/xor> \{C\}}}}}}
\]
where $\text{\{A\}, \{B\}, \{C\}}$ are the clauses in the sentence
and <and/or/xor> are the logic operators. The greedy algorithm breaks
up the combinatorially difficult problem by solving it piece by piece.
In the example, the greedy algorithm would first find the best single
clause sentence or optimize over the possible clauses \{A\} (the first
piece). Then, it would hold the one clause targeting rule (the solution
to \{A\}) fixed and then find the best logic operator and \{B\} combination
(the second piece). Lastly, it would hold the two clause targeting
rule (the solution to \{A\} <and/or/xor> \{B\}) fixed and find the
best logic operator and \{C\} combination (the third piece).

The greedy algorithm evaluates a smaller set of combinations of comprehensible
policies since it does not enumerate over all possible combinations
of clauses and logic operators. As a result, it searches over $6(\ell-1)pq\ell$
combinations for a comprehensible policy with $\ell$ clauses and
$p$ covariates discretized in $q$ candidate clauses. For fifty variables
discretized into three candidate clauses, there are approximately
$5.4$ thousand combinations to search over for a three-clause sentence,
which is many orders of magnitude smaller than the brute force approach's
$243$ million combinations. 

In the empirical application, I use the greedy algorithm to solve
for the optimal comprehensible policy. Since the combinatorial space
is dramatically reduced, searching for the optimal comprehensible
policy with the greedy algorithm for $\ell=10$ takes around a minute
while using the brute force algorithm for only $\ell=3$ has an estimated
run time of over three weeks. Both algorithms were implemented using
the R package for \texttt{torch} as the backend \citep{Falbel2023}.

\subsection{Inference for optimal comprehensible policies\label{subsec:Comp-Inference}}

To conduct inference around the optimal comprehensible policy, whether
it is found though the brute force algorithm or the greedy algorithm,
I can adopt results from \citet{Kitagawa2018}.\footnote{To complete their setup, I would need to further assume the outcome
variable ($Y)$ is bounded in addition to the standard assumptions
of Assumptions \ref{assu:Unconfoundedness}, \ref{assu:Overlap},
and \ref{assu:SUTVA}.} I am using the empirical welfare maximization framework, albeit with
a more specific function class $\mathcal{F}_{\text{comp}}^{\ell}$
and a fixed $\ell$. I now summarize how I implement their theoretical
results in this framework and further theoretical details can be found
in their paper.

\citet{Kitagawa2018} provide a minimax optimal rates for policy learning
via empirical welfare maximization. They study expected welfare regret
of a candidate policy function's welfare to the optimal policy function's
welfare. Their minimax rates around expected regret provide worst
case guarantees for finding the optimal policy function and the rates
scale at $K\sqrt{VC(d)/n}$ where $VC(d)$ represents the Vapnik-Chervonenkis
(VC) dimension of the policy function, $K$ is a constant, and $n$
is the number of observations. The VC dimension of the class of policy
functions inherently needs to be finite. To adapt the results, I define
the class of policy functions to be the class of comprehensible policies
$\mathcal{F}_{\text{comp}}^{\ell}$ with $\ell$ fixed.\footnote{The number of clauses $\ell$ need to be fixed and finite in order
for the VC dimension of the comprehensible policy to be finite (Lemma
\ref{lem:Comprehensible-Sentence-Finite-VC-Dim} in Appendix Section
\ref{sec:Decision-Trees-and-Sentences}).} 

In this setting, I only need rates for statistical inference, which
are a different objective than minimax rates. Confidence intervals
can be attained around the estimated policy function using the empirical
process bootstrap outlined in Algorithm B.1 in Appendix B of \citet{Kitagawa2018}.
I follow this procedure to attain inference around the optimal comprehensible
policy in the empirical application.

\section{Projecting down the black box\label{sec:Ex-post-policy}}

In this section, I show how comprehensible policies can be formed
by projecting down a black box policy and demonstrate that doing so
will yield less profitable comprehensible policies than finding them
directly from the data. I call this projection down procedure the
\emph{ex post }approach\emph{ }and call the procedure for finding
the policy directly from the data (Section \ref{sec:Optimal-comprehensible-policy})
the \emph{direct }approach. 

I first provide an analytical justification for why the \emph{direct}
approach will generate more profitable comprehensible policies than
the \emph{ex post} approach in Section \ref{subsec:Projecting-down-black-box-not-profit-max}.
I then show how I can attain inference around the projected down optimal
comprehensible policy by recentering the empirical process results
from \citet{Kitagawa2018} in Section \ref{subsec:Inference-for-projection-down}.

The literature in explainable AI (XAI) studies a local approximation
of a black box algorithm where they project down the black box to
a simpler, more explainable model \citep{Biran2017a,Miller2019a}.
However, I will show that using the projection down procedure to form
optimal comprehensible targeting policies leads to less profitable
comprehensible policies than those formed directly from the data (as
in Section \ref{sec:Optimal-comprehensible-policy}).

To project down the black box, I first consider a profit loss function
from two candidate targeting policies $d(x)$, $d'(x)$. I consider
profits as the outcome of interest because they are a direct measure
of producer surplus while other metrics such as AUC or classification
accuracy to the black box method do not have a direct economic interpretation. 

I choose the absolute profit difference as the loss function between
two policies $d(x)$ and $d'(x)$, expressed as 
\begin{align}
\mathcal{L}(d,d') & =|\hat{\Pi}(d)-\hat{\Pi}(d')|\nonumber \\
 & =\sum_{i=1}^{n}\underbrace{\mathbf{1}\{d(x_{i})\neq d'(x_{i})\}}_{\text{Classification loss}}\underbrace{\left|\frac{W_{i}}{e(x_{i})}\pi_{i}(1)-\frac{1-W_{i}}{1-e(x_{i})}\pi_{i}(0)\right|}_{\text{Weight}}.\label{eq:IPWE-Profit-Difference}
\end{align}
The first term, $\mathbf{1}\{d(x_{i})\neq d'(x_{i})\}$, can be interpreted
as the classification loss.\footnote{Appendix Section \ref{sec:Profit-loss-derivation} provides the derivation
of the loss function.} The second term, $|\frac{W_{i}}{e(x_{i})}\pi_{i}(1)-\frac{1-W_{i}}{1-e(x_{i})}\pi_{i}(0)|$,
can be interpreted as the classification weight for customer $i$.
This loss function is similar to the outcome weighted learning setup
from \citet{Zhao2012} and the weighted classifier setup in \citet{Zhang2012}.
Intuitively, the loss is nonzero for an individual if the two policies
differ (the classification loss) and the difference is scaled by the
absolute profit difference from disagreeing for that individual. Further,
the expected weight for the observation is $E\left[\frac{W_{i}}{e}\pi_{i}(1)-\frac{1-W_{i}}{1-e}\pi_{i}(0)\right]=E[\pi_{i}(1)-\pi_{i}(0)]$
or the expected individual level profit difference.

Using the absolute profit difference loss function in the XAI framework,
I aim to find the explainable policy $\xi(x)$ that solves the equation
\begin{equation}
\xi(x)=\underset{d'}{\arg\min}~\mathcal{L}(d,d')+\Omega(d')\label{eq:XAI-Formula}
\end{equation}
 where $\mathcal{L}(d,d')$ is the difference in the profits of the
two policy functions $d$, $d'$ as in Equation \ref{eq:IPWE-Profit-Difference}
and $\Omega(d')$ represents the complexity of explainable policy
$d'$. The generalized loss function in Equation \ref{eq:XAI-Formula}
trades off between minimizing the loss between $d$ and $d'$ and
minimizing the complexity of $d'$.

Equation \ref{eq:XAI-Formula} captures the general framework used
by the XAI literature where an explainable, transparent box model
$d'$ is used to approximate the black box model $d$ \citep{Biran2017a,Miller2019a}.
Methods in this domain include LIME \citep{Ribeiro2016} and SHAP
\citep{NIPS2017_8a20a862}, which both provide a local linear decomposition
of the black box to its covariates. 

I take a step further and directly embed explainability as a constraint
in the problem. Instead of considering a trade-off between model complexity
and the performance of the explainable policy, I choose a level of
complexity and then find the best performing explainable policy. Consider
an explainable class of policies $D(\Omega_{l})$ that all have the
same level of complexity $\Omega_{l}$, which is indexed by some constant
$l$. I now want to find the best performing explainable policy $d'\in D(\Omega_{l})$
and solve the equation
\begin{equation}
\xi_{D'}(x)=\underset{d'\in D(\Omega_{l})}{\arg\min}~\mathcal{L}(d,d')\label{eq:ex-post-Loss}
\end{equation}
where I embedded the level explainability as a direct constraint in
the optimization problem. 

I let $d$ be the estimated black box policy function $d_{BB}^{*}(x)$.
For the explainable policy, I use the class of comprehensible policies
constructed in Section \ref{sec:Comprehensible-class} and the complexity
parameter $l$ can just be the number of clauses $\ell$ in the sentence.
To unify the notation, I define $\mathcal{F}_{\text{comp }}^{\ell}=D(\Omega_{l})$
as the class of comprehensible policies of length $\ell$.

\subsection{Projecting down the black box is not profit-maximizing \label{subsec:Projecting-down-black-box-not-profit-max}}

I now show that projecting down the black box policy to find the comprehensible
policy following Equation \ref{eq:ex-post-Loss} leads to a less profitable
comprehensible policy than directly optimizing the comprehensible
policy by maximizing profits (Section \ref{sec:Optimal-comprehensible-policy}).
I first compare the two objectives used in the two approaches,
\begin{equation}
\min_{d'\in\mathcal{F}_{\text{comp}}}\left|\max_{d\in\mathcal{F}_{BB}}\hat{\Pi}(d)-\hat{\Pi}(d')\right|~\text{ vs. }\max_{d'\in\mathcal{F}_{\text{comp}}}\hat{\Pi}(d').\label{eq:Objective-comparisons}
\end{equation}
The left-hand side represents the profit level from forming a projected
down optimal comprehensible policy using Equation \ref{eq:ex-post-Loss}
and the right-hand side represents the profit level from forming the
optimal comprehensible policy directly from the data. When finding
the optimal comprehensible policy directly, $d_{\text{comp }}^{*}(x)=\arg\max_{d'\in\mathcal{F}_{\text{comp}}}\hat{\Pi}(d')$,
the property of the maximization operator says the \emph{direct} approach
will find the maximum profits for the given comprehensible policy
class. Thus, the comprehensible policy from the \emph{ex post} approach
will generate weakly less profits than the direct approach. 

The intuition behind the result is lies in the structure of the loss
functions for the two approaches. Because the ex post approach minimizes
the loss between the comprehensible policy and the black box, in cases
where the black box does not classify customers well, the ex post
approach will also not do well. In contrast, the direct approach learns
the policy straight from the data as it does not depend on the black
box policy's results. In fact, if the black box does better than the
comprehensible policy in generating a more profitable policy for all
individuals in the data, then the two approaches will be the same.

I formalize the last statement to show that the two objectives in
Equation \ref{eq:Objective-comparisons} will be equal if the black
box policy outperforms the comprehensible policy for all customers
with the following assumption. I denote the individual-level profits
from targeting policy $d(x_{i})$ as $\pi(d(x_{i}))$.
\begin{assumption}
\label{assu:DNN-more-profitable} The profits generated from the black
box are weakly greater than that of the comprehensible policy, so
$\pi(d_{BB}(x_{i}))\geq\pi(d(x_{i})),\forall d\in\mathcal{F_{\text{comp }}^{\ell}}$. 
\end{assumption}

Under Assumption \ref{assu:DNN-more-profitable}, 
\begin{align*}
\min_{d'\in\mathcal{F}_{\text{comp}}^{\ell}}\left|\max_{d\in\mathcal{F}_{BB}}\hat{\Pi}(d)-\hat{\Pi}(d')\right| & =\min_{d'\in\mathcal{F}_{\text{comp}}^{\ell}}\left(\max_{d\in\mathcal{F}_{BB}}\hat{\Pi}(d)-\hat{\Pi}(d')\right)\\
 & =\min_{d'\in\mathcal{F}_{\text{comp}}^{\ell}}-\hat{\Pi}(d')\\
 & =\max_{d'\in\mathcal{F}_{\text{comp}}^{\ell}}\hat{\Pi}(d'),
\end{align*}
where I used the assumption in the first line to remove the absolute
value as $\max_{d\in\mathcal{F}_{BB}}\hat{\Pi}(d)=\hat{\Pi}(d_{BB}^{*}(x))\geq\hat{\Pi}(d')$
for $d'\in\mathcal{F}_{\text{comp}}^{\ell}$ and I used that $d'$
does not show up in the first term to get to the second line. 

While black box algorithms converge to the true policy function asymptotically
(satisfying Assumption \ref{assu:DNN-more-profitable}), overfitting
can degrade their performance in finite samples. Projecting down an
overfit black box model produces comprehensible rules inferior to
those found by direct optimization.

I interpret these results as cautionary guidance for marketing managers.
The standard XAI approach\textemdash projecting down the black box
to a smaller explanatory model\textemdash prioritizes approximating
the black box over maximizing firm objectives. Firms seeking to implement
comprehensible policies should directly optimize profits rather than
inheriting deficiencies of intermediate black box models.

\subsection{Inference for projected down optimal comprehensible policies \label{subsec:Inference-for-projection-down}}

To conduct inference around the \emph{ex post} approach, I recenter
the empirical process results from \citet{Kitagawa2018} for the loss
function in Equation \ref{eq:ex-post-Loss}. By showing their results
apply to my framework, I can conduct inference for the \emph{ex post}
comprehensible policy.

I first define the notation for the theorem and suppress the dependence
of the comprehensible policy targeting rule $d$ on the length of
the sentence $\ell$ for notational simplicity. These results holds
for a finite number of clauses $\ell$, implying the comprehensible
policy has finite VC dimension (Appendix Lemma \ref{lem:Comprehensible-Sentence-Finite-VC-Dim}).
I first define $\pi(d)$ to be the individual-level profits from targeting
policy $d$, and then define
\begin{align*}
\check{d}^{*}(x) & =\arg\max_{d\in\mathcal{F}_{\text{comp}}}E_{P}\left[\left|\pi(d_{BB}^{*})-\pi(d)\right|\right]\\
\hat{d}_{ex}(x) & =\arg\min_{d\in\mathcal{F}_{\text{comp}}}E_{n}\left[\left|\pi(d_{BB}^{*})-\pi(d)\right|\right],
\end{align*}
where the first line has the expectation taken over the population
distribution $P$ and the second line has the expectation taken over
the sample analog. The second line produces the \emph{ex post} optimal
comprehensible policy $\hat{d}_{ex}(x)$ and is equivalent to finding
the optimal comprehensible policy by solving the sample analog of
Equation \ref{eq:ex-post-Loss}.\footnote{The formal definition of $\pi(d)$ is in Equation \ref{eq:Individual-Profit-Estimator}
in Appendix Section \ref{sec:Profit-loss-derivation}. }

I impose sample splitting: I estimate $d_{BB}^{*}(x)$ for the optimal
black box policy in one data sample, find the optimal comprehensible
policy $\hat{d}_{ex}(x)$ in another data sample, and conduct inference
in the last data sample. I further define $\Gamma(d)=E_{P}\left[\left|\pi(d_{BB}^{*})-\pi(d)\right|\right]$
to be the profit loss for the targeting rule $d$ evaluated in the
population and $\Gamma_{n}(d)=E_{n}\left[\left|\pi(d_{BB})-\pi(d)\right|\right]$
to be the profit loss for targeting rule $d$ in the sample.
\begin{thm}
\label{thm:Convergence-rate-of-ex-post-comprehensible-policies}(Uniform
convergence rate of ex post comprehensible policies) Under Assumptions
\ref{assu:Unconfoundedness}, \ref{assu:Overlap}, and \ref{assu:SUTVA}
and the assumption that the outcome variable ($Y$) is bounded, for
a hypothesis space of comprehensible policies with $\ell$ clauses
($\mathcal{H}=\mathcal{F}_{\text{comp}}^{\ell}$) that has bounded
VC dimension ($VC(\mathcal{H})<V<\infty$), 
\[
\sup_{P}E_{P}\left[\Gamma(\check{d}^{*})-\Gamma(\hat{d}_{ex})\right]\leq C\sqrt{\frac{V}{n}}=O_{p}\left(1/\sqrt{n}\right).
\]
\end{thm}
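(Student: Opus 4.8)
The plan is to recognize the \emph{ex post} problem as an empirical welfare maximization exercise in the sense of \citet{Kitagawa2018} and then transport their regret bound. Minimizing the sample loss $\Gamma_n(d)=E_n[|\pi(d_{BB}^{*})-\pi(d)|]$ over $d\in\mathcal{F}_{\text{comp}}^{\ell}$ is equivalent to maximizing the recentered welfare $-\Gamma_n(d)$, whose population maximizer is the optimum $\check{d}^{*}$ (equivalently, the minimizer of the population loss $\Gamma$) and whose empirical maximizer is $\hat{d}_{ex}$. This framing lets me treat the excess loss of the ex post policy exactly as a welfare regret, so the target is a direct application of their empirical-process result once I verify that the governing function class meets their hypotheses.

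First I would reduce the regret to a uniform deviation. Adding and subtracting the empirical objective and using that $\hat{d}_{ex}$ minimizes $\Gamma_n$ (so $\Gamma_n(\hat{d}_{ex})-\Gamma_n(\check{d}^{*})\le0$), the standard two-line argument gives
\[
\Gamma(\hat{d}_{ex})-\Gamma(\check{d}^{*})\;\le\;2\,\sup_{d\in\mathcal{F}_{\text{comp}}^{\ell}}\bigl|\Gamma_n(d)-\Gamma(d)\bigr|,
\]
so it suffices to control the right-hand side uniformly over $P$ at rate $\sqrt{V/n}$.

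Second I would identify the loss class as a bounded VC class. Writing $A_i=\tfrac{W_i}{e(x_i)}\pi_i(1)$ and $B_i=\tfrac{1-W_i}{1-e(x_i)}\pi_i(0)$, linearity of $\pi(d(x_i))$ in $d(x_i)$ yields the pointwise identity
\[
\bigl|\pi(d_{BB}^{*}(x_i))-\pi(d(x_i))\bigr|=\bigl|A_i-B_i\bigr|\,\mathbf{1}\{d_{BB}^{*}(x_i)\neq d(x_i)\},
\]
which is precisely the weighted classification form of Equation \ref{eq:IPWE-Profit-Difference}. Under sample splitting I condition on the first fold, so $d_{BB}^{*}$ is a fixed function; the indicator $\mathbf{1}\{d_{BB}^{*}\neq d\}$ is then $d_{BB}^{*}\oplus d$, and XOR with a fixed classifier is a bijection on labelings that preserves shattering, so the loss class inherits the finite bound $VC(\mathcal{F}_{\text{comp}}^{\ell})<V<\infty$ from Appendix Lemma \ref{lem:Comprehensible-Sentence-Finite-VC-Dim}. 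Boundedness of $Y$ together with Overlap (Assumption \ref{assu:Overlap}) makes the weights $|A_i-B_i|$ uniformly bounded, so the class has a finite envelope.

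With a uniformly bounded VC class in hand, I would invoke the symmetrization and Dudley-entropy bound underlying the empirical-process argument of \citet{Kitagawa2018} to conclude
\[
\sup_{P}E_{P}\Bigl[\sup_{d\in\mathcal{F}_{\text{comp}}^{\ell}}\bigl|\Gamma_n(d)-\Gamma(d)\bigr|\Bigr]\le C\sqrt{\tfrac{V}{n}},
\]
and combine it with the reduction above. The main obstacle I anticipate is not the empirical-process step, which is routine once the class is pinned down, but the careful verification that the recentered \emph{ex post} objective genuinely satisfies the hypotheses of \citet{Kitagawa2018}: I must check that sample-splitting legitimizes treating $d_{BB}^{*}$ as a fixed function so the inference fold still delivers the i.i.d.\ and moment conditions, and that absorbing the black box through the XOR neither inflates the VC dimension nor breaks the $P$-uniform envelope (which is where a strict-overlap strengthening of Assumption \ref{assu:Overlap} enters to keep the constant $C$ independent of $P$). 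Getting this bookkeeping right is exactly what upgrades the generic $\sqrt{V/n}$ bound to the stated uniform $O_p(1/\sqrt{n})$ conclusion.
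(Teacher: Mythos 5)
Your proposal is correct and takes essentially the same route as the paper's proof: the two-term decomposition using empirical optimality of $\hat{d}_{ex}$ to get regret $\leq 2\sup_{d\in\mathcal{F}_{\text{comp}}^{\ell}}|\Gamma_{n}(d)-\Gamma(d)|$, followed by a VC-class/bounded-envelope empirical-process bound, where the paper simply invokes Lemma A.4 from the supplement of \citet{Kitagawa2018} (which packages your symmetrization--Dudley step) together with Appendix Lemma \ref{lem:Comprehensible-Sentence-Finite-VC-Dim} for finite VC dimension, while your XOR-with-a-fixed-classifier and sample-splitting remarks make explicit what the paper leaves implicit. The only divergence is orientational: you bound $\Gamma(\hat{d}_{ex})-\Gamma(\check{d}^{*})$ reading $\check{d}^{*}$ as the population loss minimizer, whereas the paper's chain uses $\Gamma_{n}(\hat{d}_{ex})\geq\Gamma_{n}(\check{d})$, i.e.\ the max/max orientation matching its literal $\arg\max$ definition of $\check{d}^{*}$; the argument is symmetric, so both readings resolve the paper's min/max notational inconsistency equivalently.
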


I provide the proof of Theorem \ref{thm:Convergence-rate-of-ex-post-comprehensible-policies}
in Appendix Section \ref{sec:Proofs-for-XAI}. The technical implication
is that the rate for the difference in the profit loss scales at $\sqrt{VC(d_{ex})/n}$
for \emph{ex post} comprehensible policy $d_{ex}$. This is an upper
bound and the lower bound results from \citet{Kitagawa2018} can be
similarly recentered to provide minimax rates for the profit differences.
Inference can be attained around this \emph{ex post }approach by using
the empirical process bootstrap.

\section{Empirical application\label{sec:Empirical-application}}

In this section, I provide an application to promotions management
for a durable goods retailer as a proof of concept of the methodological
framework. I first find the best performing black box to establish
the optimal black box policy benchmark $d_{\text{BB}}^{*}(x)$. Then,
I find the optimal comprehensible policy $d_{\text{comp}}^{*}(x)$
and compare the targeting differences between the two targeting policies.
I denote the profit difference of the best-performing black box to
the comprehensible policy as the cost of explanation when implementing
the comprehensible policy.\footnote{This framework is more general than shown in this application. A similar
analysis can be performed for any black box policy class and any comprehensible
policy class.} I then show that finding the optimal comprehensible policy directly
produces a more profitable targeting policy than finding it by projecting
down the black box targeting policy. 

I use the second ISMS Durable Goods dataset from \citet{Ni2012} that
contains a price promotion randomized control trial (RCT) for a durable-goods
store in 2003. The items available are mainly electronics and they
encompass a range of products from small ticket to large ticket items.\footnote{To provide a concrete example, a sample small ticket item is something
like a drip coffee maker and a sample large ticket item is something
like a refrigerator.} The price promotion is a \$10-off coupon that is valid on the next
in-store purchase.

The RCT contains 176,961 customers, and the treated customers were
sent a promotion with probability $50\%$. The control group was not
mailed any promotion. The dataset contains approximately $150$ recency,
frequency, and monetary (RFM) covariates that describe the customers'
past behavior with the firm. The outcome of interest is sales during
December 2003 (the promotional period) and the price promotion was
mailed to the customers before December 2003. 

To check for evidence of the covariate balance and the overlap assumption,
I run a logistic regression to check if I can statistically predict
the treatment variable of getting the promotion with the RFM data.
I find that only the intercept value is statistically significant
in the regression. Appendix Figure \ref{fig:Prop-Score-Overlap} shows
the density of estimated propensity scores for the treated and the
not treated groups from the logistic regression. The two densities
essentially fully overlap which suggests that overlap and covariate
balance hold in the data. Further checks for covariate balance can
be found in \citet{Ni2012}.

These results suggest the RCT was correctly implemented so the unconfoundedness
and overlap assumptions should hold in the data. I further make the
assumptions that customers used the coupon on their next possible
purchase, there's no gaming of coupons, and there are no spillover
effects from the mailed promotions to satisfy the stable unit treatment
value assumption (SUTVA). 

With the three standard assumptions satisfied, I can estimate the
average treatment effect (ATE) of the price promotion on December
2003 and find the ATE to be $2.68$ with a $95\%$ confidence interval
of $(1.63,3.73)$. This result suggests that there is a statistically
significant effect of the price promotion on December sales.

From the sales data, I also see that only $3.6\%$ of all customers
purchase during December. Further, conditional on purchase, the median
spend size is $\$149.99$. There are a handful of individuals in the
data who spend over five thousand in the store. Since I did not collect
the data, I am not sure if these are outliers or errors in the data.
In my subsequent analysis, I drop those that spend more than $\$800$
at the store (the top $0.29\%$ of spenders) in the data.\footnote{I further motivate this data cleaning procedure by assuming that if
a customer decides to spend a few thousand at the store than they
are less influenced by the \$10-dollar off promotion to make the purchase
because it is effectively a smaller percentage off the base price.
}

I impose profit margins $m=45\%$ and cost of mailing $c=37\cent$
to complete the setup. The latter is the price of mailing a letter
at USPS during the time period. Alternative numbers for the profit
margin can be used in the framework and they can even be set to vary
by customer covariates. 

I first randomly split the data 80/20, estimate the models with 80\%
of the data, and evaluate different methods' targeting policies out
of sample using 20\% of the data.\footnote{I denote evaluating models in the $80\%$ training data as in sample
evaluation and in the 20\% validation data as out of sample evaluation.} I use the Policy DNN, Causal DNN, Causal Forest, and Lasso black
box methods and implementation details are provided in Appendix Section
\ref{sec:Implementation-Details}.

Causal Forests and Causal DNNs are commonly used as a state-of-the-art
procedure in applied economics and marketing literatures \citep{Wager2018,Farrell2021}.
The Causal Forest bootstrap aggregates the Causal Trees from \citet{Athey2016}.
Lasso is a popular machine learning method when linearity of the baseline
model and the heterogeneous treatment model is assumed \citep{Hastie2015,Taddy2019}.
The Causal Forest, Causal DNN, and Lasso first estimate $\hat{\beta}(x)$
which is then plugged in optimal targeting policy function. In contract,
Policy DNN directly learns the policy function $\hat{d}(x)$ and represents
the policy function as deep neural network \citep{Zhang2024}. These
four methods provide benchmark black box models used in the literature.

With the black box targeting policies, I can evaluate the expected
profits under the targeting policy using Equation \ref{eq:IPWE-Profits}.
Table \ref{tab:Black-Box-Profits} provides the out of sample individual
expected profits from these black box models as well as from a blanket
targeting policy where everyone is sent the promotion. I see that
the Policy DNN ($\$3.03$) does better than both the Causal Forest
($\$2.78$) and Lasso ($\$2.70$) procedures in generating profits.
All black box methods perform better than the blanket targeting policy
($\$2.42$). 

I interpret the profit gap between the Policy DNN and the Causal DNN
as the difference from the policy learning approach where the optimal
targeting policy is learned directly from the data to the standard
approach that first learns the heterogeneous treatment effects and
then plugs them into the optimal treatment rule. Less information
is needed to learn the optimal policy function directly than to learn
the heterogeneous treatment effects from the data. By focusing on
only learning what is needed for the targeting policy, the procedure
attains better performance in the dataset. 

I then interpret the small profit gap between the Causal Forest and
the Lasso as reflecting that the heterogeneous treatment effects can
be well approximated by a sparse linear functional form. Lastly,
the gap between the Lasso and blanket mailing represents the difference
between doing any personalization using a black box algorithm and
doing no personalization. This profit gap is as large as the gap between
the Policy DNN and the Causal Forest. 

For the remainder of the analysis, I use the Policy DNN as the black
box benchmark, as it achieves the highest profits among all methods
tested. This choice reflects realistic firm behavior: in practice,
firms would deploy their best-performing unconstrained model. The
framework generalizes to any black box approach\textemdash Policy
DNN simply represents the strongest baseline for quantifying the cost
of comprehensibility constraints in this application.

I also examine Policy Tree, which is an interpretable model that projects
Causal Forest estimates onto a decision tree \citep{Athey2021}. Table
\ref{tab:Black-Box-Profits} reports profits for a depth-2 Policy
Tree: it underperforms the Causal Forest but outperforms the Lasso.
This ranking suggests that second-order interactions capture some
valuable heterogeneity in targeting decisions.\footnote{A larger Policy Tree is computationally prohibitive due its global
search over all possible trees.} 

I form the optimal comprehensible targeting policy following Section
\ref{sec:Comprehensible-class} and Section \ref{sec:Optimal-comprehensible-policy}.\footnote{The brute force solution for a two clause targeting policy is in Appendix
Section \ref{sec:Implementation-Details}.} I denote the optimal comprehensible policy as $d_{\text{comp }}^{*}(x)$,
and the three-clause optimal comprehensible policy using the greedy
optimization algorithm is:

\begin{align*}
\text{Target customer if she:}\\
1. & \text{ has bought \emph{high} amount of items during Christmas over the last two years }\\
 & and\\
2. & \text{ did \textbf{not} have \emph{high} spending during Spring over the last two years}\\
 & or\\
3. & \text{ has \emph{low} spending during last years holiday mailer promotional period}.
\end{align*}
The customers that are not described by the targeting policy are not
targeted. The construction of the clauses with \emph{low} and \emph{high}
descriptors from the RFM dataset follow the discussion in Section
\ref{subsec:Generating-clauses}. 

I interpret this targeting sentence as largely targeting two major
segments of customers. The first group is described by the first two
clauses. These are customers who buy a lot during Christmas but not
a lot in the Spring, or people who focus their spending during the
holiday period. Since the outcome of interest is December sales, these
are individuals who spend a lot during the target period and may be
more price sensitive.

The second group are those who are on the RFM customer list but have
not spent a lot during the promotional December period the prior year.
I consider this customer group as those who spent at the store in
years before but then either forgot about the store or went to another
store the previous year. The optimal comprehensible policy suggests
retargeting these customers and incentivize them to come back to the
store. Customers in either of these two groups will be targeted by
the optimal three-clause comprehensible policy.

\subsection{Do the targeting policies differ?\label{subsec:Targeting-Policy-Difference}}

I first compare the targeting differences between Policy DNN $d_{BB}^{*}(x)$
and the optimal comprehensible policy $d_{\text{comp}}^{*}(x)$, which
I call the \emph{direct} method in the figures and tables.\footnote{Section \ref{sec:Ex-post-policy} makes the distinction between the
\emph{direct} and \emph{ex post} methods for finding the optimal comprehensible
policy.} Figure \ref{fig:Targeting-Percentage} shows the targeting percentage
of the customers for $\ell\in\{1,\ldots10\}$ number of clauses. Policy
DNN targets $18.1\%$ of customers. I see that with one clause, the
optimal comprehensible policy targets more than Policy DNN. However,
with three clauses, it gets to the closest in targeting percentage
to Policy DNN. Adding more clauses in this setting appears to reduce
the overall targeting percentage of the comprehensible policy.

I provide a visual demonstration of the targeting policy differences
with a two-clause optimal policy in Figure \ref{fig:Two-Clause-Diagram}.\footnote{The two-clause comprehensible policy is the just the first two clauses
of three clause targeting policy from before, ``Target if customer
has high amount of items during Christmas over the last two years
and did not have high Spring spending over the last two years.''} The axes represent two RFM covariates in the dataset: \emph{spring
sales over the last 24 months} and \emph{items bought during Christmas
over the last 24 months}. The jittered grey circular points represent
the customers in the raw dataset. The jittered green triangular points
are customers that Policy DNN targets. Since Policy DNN learns a higher
order representation of the data, it is not clear what its targeting
rule is when visualized on these two dimensions. In contrast, the
two-clause optimal comprehensible policy is represented by the pink
rectangle, and everyone covered by the rectangle will targeted by
the comprehensible policy. 

I now consider the three-clause optimal comprehensible policy and
provide the confusion matrix in Table \ref{tab:Targeting-policy-differences}.
I see that the three-clause policy targets 18.3\% of the customer
base and there is a 75.9\% overlap between Policy DNN and the three-clause
targeting policy. More specifically, the two policies agree in targeting
6.3\% of the customers and not targeting 69.6\% of the customers.
The three-clause comprehensible policy targets 11.8\% of customers
who are not targeted by Policy DNN and Policy DNN targets 12.3\% of
the customers not targeted by the three-clause comprehensible policy.
Since the overtargeting and undertargeting differences are relatively
balanced when comparing the comprehensible policy to Policy DNN, it
seems that the comprehensible policy is capturing similar variation
as Policy DNN but cannot personalize as finely due to its comprehensibility
constraint. 

\subsection{Cost of explanation\label{subsec:Targeting-Policy-Profit-Difference}}

I now quantify the profit differences from Policy DNN and the optimal
comprehensible policy and denote this gap as the \emph{cost of explanation}.
In Figure \ref{fig:Individual-Expected-Profits}, I visualize the
expected individual profits for the optimal comprehensible policy
by the number of clauses $\ell\in\{1,\ldots,10\}$. I focus on the
\emph{direct} method for now and see that in sample as the number
of clauses increases, the comprehensible policy does better. This
result captures the fact that with more clauses the comprehensible
policy can make more partitions of customers and better personalize
the targeting policy. 

Out of sample, the profits increase with more clauses but decrease
slightly after eight clauses. These results suggests that with nine
or ten clauses, the direct method can still overfit to the training
data. I also see that out of sample, the gap between the optimal comprehensible
policy and Policy DNN is smaller because Policy DNN is likely overfitting
in sample. After three clauses, the comprehensible policy's expected
profits is not statistically significantly different from that of
Policy DNN. 

I evaluate the cost of explanation, or the profit difference of the
two policies, for the three-clause optimal comprehensible targeting
policy in Table \ref{tab:Cost-of-Explanation}. Per person, the optimal
comprehensible policy generates $\$2.80$ in expected profits and
the out of sample cost of explanation is $23$ cents. This implies
that implementing the three-clause optimal comprehensible policy policy
instead of the Policy DNN black box policy will lead to a $23$ cents
loss in expected profits per person. This is a $7.5\%$ loss compared
to the Policy DNN profits and the comprehensible policy provides a
$16\%$ gain in profits compared to a blanket mailing policy. 

Further, the difference between Policy DNN and the blanket mailing
is $61$ cents. The optimal comprehensible policy provides a $38$
cents gain over blanket targeting, or recouping $(60-23)/(60)=62\%$
of the expected losses from implementing the blanket mailing policy.
Thus, the firm does notably better by implementing the three-clause
comprehensible policy than a blanket mailing policy.

Notably, the optimal comprehensible targeting policy (\$2.80) outperforms
both the Causal Forest (\$2.78) and Lasso (\$2.70) in out-of-sample
profits (Table \ref{tab:Black-Box-Profits}). While these differences
are not statistically significant, they suggest potential advantages
of direct policy learning over two-step approaches, even for simple
comprehensible policies. Similarly, the Policy DNN outperforms the
Causal DNN.

I interpret these results as reflecting the limited treatment heterogeneity
in this application. Comprehensible policies perform well when the
true optimal policy is relatively simple. With greater heterogeneity
or larger datasets enabling complex pattern detection, flexible black
box models should dominate.

\subsection{Projecting down the black box}

I apply the\emph{ ex post} approach from Section \ref{sec:Ex-post-policy}
to project the black box model down to a comprehensible policy. I
use the same 80/20 data split to evaluate the methods but split the
training sample again.\footnote{Policy DNN is trained on $40\%$ of the data, the DNN is projected
down to form the \emph{ex post }comprehensible policy in $40\%$ of
the data, and inference is conducted in the last $20\%$ of the data.}

Figure \ref{fig:Individual-Expected-Profits} plots the individual
expected profits for the \emph{direct} and \emph{ex post} methods.
I see that the \emph{direct} method outperforms the\emph{ ex post}
method out of sample. In expectation, the direct method generates
$\$2.80$ per person while the ex post procedure only generates $\$2.59$
per person. The \emph{ex post} procedure's targeting policy is, 
\begin{align*}
\text{Target customer if she:}\\
1. & \text{ has bought \emph{high} amount of items during Christmas over the last two years }\\
 & or\\
2. & \text{ has \emph{high} amount of back to school items over the last two years}\\
 & or\\
3. & \text{ has \emph{no} purchased item during the Expo promotion two years ago}.
\end{align*}
Comparing this targeting policy that of the \emph{direct} approach,
I see that their first clauses are identical, but their second and
third clauses as well as their logic operators are different. The
differences imply that the two approaches are capturing different
partitions of the customer base to target.

The \emph{ex post} approach targets three segment of customers:
It targets customers who buy a lot during Christmas, buy a lot of
back to school items, or did not purchase during the Expo promotion
two years ago. The first two segments are those who consistently exhibit
significant spending patterns at the end of the year and the last
segment are newer customers or those not responsive to promotions
during a consumer electronics exhibition. 

Table \ref{tab:Cost-of-Explanation} shows the cost of explanation
for the two methods with three clauses. I see that the cost of explanation
for the \emph{ex post} procedure (53 cents) is higher than that of
the the \emph{direct} procedure out of sample (23 cents) out of sample.
These results match the profit differences visualized in Figure \ref{fig:Individual-Expected-Profits}
and suggest the \emph{direct }approach generates more profitable comprehensible
policies. 

Further, a two-clause optimal comprehensible policy learned directly
generates $\$2.76$ per person, slightly outperforming the depth-2
Policy Tree ($\$2.72$). Despite the Policy Tree's more flexible functional
form, direct optimization yields higher profits. Since the Policy
Tree projects down from the Causal Forest, this result reinforces
that comprehensible policies should be optimized directly rather than
derived from black box models.

Overall, this application empirically verifies the analytical results
from Section \ref{subsec:Projecting-down-black-box-not-profit-max}.
Optimal comprehensible policies should be found directly from the
data (Section \ref{sec:Optimal-comprehensible-policy}) rather than
found by projecting down from the black box policy (Section \ref{sec:Ex-post-policy}).

\section{Discussion\label{sec:Discussion}}

In this section, I study how firm managers can use the the proposed
framework to analyze their decision to stay with a black box algorithm
or to move to a comprehensible policy when forming targeting policies
for their marketing mix. Circling back to the framework overview in
Figure \ref{fig:Methodological-overview}, I revisit the trade-off
between profits and comprehensibility. Managers compare the complete
producer surplus generated by the two policies, or
\[
\underbrace{\Pi_{BB}-R_{BB}+B_{BB}}_{\text{Black Box}}\text{ vs. ~\ensuremath{\underbrace{\Pi_{comp}-R_{comp}+B_{comp}}_{\text{Comprehensible Policy}}}},
\]
to make the decision. I first highlight three components of the complete
producer surplus to study the manager's problem of whether to stay
with the black box or move to a comprehensible policy. 

The first term ($\Pi_{BB},\Pi_{comp}$) represents the short term
profit loss from moving away from the black box to the optimal comprehensible
policy. The proposed framework constructs the black box and optimal
comprehensible policy benchmarks, and the analysis in Section \ref{sec:Empirical-application}
quantifies the short term profit loss, or the cost of explanation,
in the empirical example.

The second term ($R_{BB},R_{comp}$) represents the regulatory penalty
that the firm faces while implementing its chosen targeting algorithm.
If enforced, right-to-explanation laws will penalize black boxes and
but not comprehensible targeting policies. Thus, the expected regulatory
penalty will be higher for the black box policy ($R_{BB}>R_{comp}$).

The third term ($B_{BB},B_{comp}$) represents the long-term effects
of offering a comprehensible policy. Consumers can firms can benefit
from comprehension as it can build better brand equity for the firm
and makes implementing the targeting policy easier by its representatives.
If comprehension leads to long-term benefits or costs, then it should
be considered by firm managers when making the decision.

In Section \ref{subsec:GDPR-Calibration}, I focus on the first two
terms that balance profitability and with the expected regulation
penalty ($\Pi_{BB}-R_{BB}$ vs. $\Pi_{comp}-R_{comp}$). I leverage
the proposed framework to study the effect of right-to-explanation
legislation on firm's profits as the firm moves to an optimal comprehensible
policy. This calibration exercise locally quantifies the economic
impact that right-to-explanation legislation imposes on firms if enforced. 

In Section \ref{subsec:Long-term-effects-of-Comprehensibility}, I
then discuss the possible long-term effects of comprehensible targeting
policies. Even though I do not have the data to study the benefits
of offering a comprehensible policy in my empirical application, I
outline potential factors that firms should consider when forming
their decision.

\subsection{Effect of GDPR's \textquotedblleft right to explanation\textquotedblright{}
on firms \label{subsec:GDPR-Calibration}}

The EU's GDPR legislation states customers have a ``right to explanation''.
This clause would require firms to employ human representatives to
provide both ``an explanation'' and ``meaningful information about
the logic involved'' behind any decision made by an algorithm \citep{EuropeanCommission2016}.\footnote{Specifically, the GDPR legislation states:

\textquotedblleft {[}The data subject should have{]} the right to
obtain human intervention, to express his or her point of view, to
obtain an explanation of the decision reached\textquotedblright{} 

\textquotedblleft {[}and given{]} access to meaningful information
about the logic involved\textquotedblright{} 

\hfill{}\citep{EuropeanCommission2016}} If the right-to-explanation clause is enforced, the penalties for
violating GDPR are the larger of $4\%$ of global revenues or $20$
million Euros. Although how exactly the right-to-explanation laws
apply to firms is a subject of ongoing legal debate \citep{Wachter2016},
it is prudent for forward-looking firms to consider their potential
implications. With my framework, I evaluate their impact on profits
as a firm transitions from a black box targeting policy to an optimal
comprehensible targeting policy to comply with the legislation.

From the empirical application, I showed the cost of explanation ($\Pi_{BB}-\Pi_{comp}$)
was $23$ cents per person for the three-clause optimal comprehensible
policy. For a customer basis of $10$ million, this implies $2.3$
million dollars of lost profits due to moving away from the black
box targeting policy.

GDPR litigation and enforcement has been publicly focused on multinational
technology firms, but all firms under EU jurisdiction must abide by
the law. To put the lost profits in perspective, I provide the following
stylized calibration exercise. I assume the firm in the empirical
application is small enough for the $20$ million euros to be the
penalty, a one-to-one exchange rate of euros to dollars, and a perceived
enforcement rate of $10\%$. With these simplifying assumptions, the
expected penalty of noncompliance is $2$ million dollars. In the
framework, I set $R_{BB}$ as $2$ million dollars and $R_{comp}$
to be zero. The firm now compares the expected profit loss ($\Pi_{BB}-\Pi_{comp}=$
$2.3$ million dollars) to the expected regulatory penalty ($R_{BB}-R_{comp}=$
$2$ million dollars).

From a regulatory perspective, the cost of compliance ($\$2.3$ million)
exceeds the expected penalty for non-compliance ($\$2$ million under
GDPR). This gap creates misaligned incentives for policy adoption.
Regulators could close this gap by increasing either penalty amounts
or enforcement rates. Notably, the subsequent EU's AI Act has increased
penalty amounts, potentially addressing this incentive gap.

On a flip side, ensuring compliance with the right-to-explanation
clause has a nontrivial impact on the firm's bottom line. The expected
profit loss of $\$2.3$ million is for one month of sales; scaled
up annually, that is $\$27.6$ million in lost profits if the firm
ran a promotional strategy every month. As a result, the impact of
right-to-explanation laws can be quite substantial and regulators
should consider these downstream impacts as they seek to implement
data and privacy laws in other jurisdictions. 

This calibration exercise can be readily extended to capture more
complex settings. The regulatory penalty can depend on the complexity
of the targeting policy as more complex targeting rules can face higher
enforcement rates.\footnote{\citet{Lambin2023} study an incomplete information game between firms
and regulators where firms can offer explainable algorithms or black
box algorithms and regulators can choose to audit the firm for compliance
with right-to-explanation laws.} Then, $R_{BB}$ can increase in the complexity of the black box.
For comprehensible policies that are have more than five clauses and
are not conversational, $R_{comp}$ can be set to increase with the
number of clauses in the comprehensible targeting policy. Other extensions
to this exercise can be added to tailor it to different scenarios.

Lastly, I showed the cost of explanation the calibration study for
one specific class of comprehensible policies that I proposed in Section
\ref{sec:Comprehensible-class}. This class of targeting sentences
conservatively complies with the right-to-explanation clause. Naturally,
other classes of comprehensible targeting policies can be considered
in the general framework and the calibration exercise can be repeated
with different classes of comprehensible targeting policies and black
box targeting policies.

\subsection{Long-term effects of comprehensibility \label{subsec:Long-term-effects-of-Comprehensibility}}

In many settings, offering a comprehensible policy can lead to downstream
benefits for the firm. Customers can learn from a comprehensible policy
but cannot learn from a black box. If the policy benefits customers,
they can learn how to get the treatment again which can build further
brand equity with the firm. On the flip side, customers who were excluded
from the treatment can understand why they were left out and what
they need to do in order to get the treatment.

For the firm's perspective, implementing a comprehensible policy is
simpler than implementing a black box policy. If the firm representatives
need to implement the targeting policy, then it is easier for them
to train and follow a comprehensible policy. For example, training
salespeople to follow a comprehensible policy will be simpler than
doing so for a black box policy. It is also easier for these salespeople
to explain the firm's policy to its customers. Comprehensible policies
are also easier to diagnose by the firm, and the firm can audit these
policies to ensure they do not use information from protected classes.

However, for some settings, comprehensibility of the targeting policy
may not be beneficial for customers and firms. Customers may even
dislike a transparent explanation of the algorithmic decision policy
in certain settings. For example, in online dating, an explanation
of the matchmaking algorithm may draw ire from customers. For firms
operating in a competitive environment, offering comprehensible targeting
policies can give competing firms insight about the firm's profitable
customer base and its decision making. In equilibrium, it may not
be beneficial to the firm to reveal such information to its competitors.

As a result, managers need to consider the long-term benefits and
costs of comprehensibility for the black box policy ($B_{BB}$) to
that of the comprehensible policy ($B_{comp}$). In many cases, it
seems that the benefits for comprehensibility are positive and are
long-term ($B_{comp}>B_{BB}$). Future research can explore the long-term
effect of comprehensibility for firms and customers.

\section{Conclusion\label{sec:Conclusion}}

Data and privacy regulations like GDPR and CCPA are swiftly gaining
traction worldwide. With GDPR, regulators in Europe now ask for a
``right to explanation'' where a black box algorithm's decisions
need to be explainable to customers by the firm's human representatives.
They have increasingly cracked down on firms violating data and privacy
laws, and proposals to expand the regulation have only increased.\footnote{Meta Platforms was fined 1.2 billion euros for not abiding by GDPR
rules on May 22, 2023 \citep{EDPB2023}. Although the GDPR violation
did not specifically pertain to the ``right to explanation\textquotedbl{}
clause, it highlights increasing enforcement of GDPR regulations.
Consequently, forward-looking firms should factor these regulations
into their decision making process.}

This paper provides a framework for firms to navigate right-to-explanation
laws. The framework constructs comprehensible targeting policies that
satisfy right-to-explanation requirements and quantifies the costs
of compliance. I propose a class of comprehensible policies that should
satisfy the new regulatory constraints and that takes on the form
of sentences. These sentences are conditional clauses linked by logic
operators. I further show how to find the optimal, profit-maximizing,
comprehensible policy for a sentence of given clause length. I then
benchmark the optimal comprehensible policy to the best-performing
black box targeting policy.

With the established framework, I document how the two targeting policies
differ and then quantify the cost of explanation, or the profit loss
from implementing the optimal comprehensible policy to the black box
policy. I provide an application for sending \$10-off promotions for
a durable goods retailer. I find the cost of explanation to be $23$
cents per person for a three-clause optimal comprehensible targeting
policy; this cost constitutes a $7.5\%$ loss in profits from the
black box policy.

I quantify the profit loss that the firm will face from complying
with right-to-explanation regulation. In the application, for a basis
of $10$ million customers, the cost of explanation leads to a $\$2.3$
million profit loss from the firm's promotional strategy. These losses
represent the economic impact on the firm when abiding by data and
privacy regulation. While GDPR fines have been mainly levied on large
multinational technology companies, my framework provides a localized
way for any company under its jurisdiction to quantify and evaluate
the regulation's impact on its bottom line. The recently enacted EU's
AI Act expands explainability requirements and increases penalties,
making such cost assessment increasingly critical for firms.

 This framework can be extended to capture benefits along with costs.
While I only provide a cost analysis of how comprehension in marketing
policies acts as a constraint on the firm's personalization and targeting
strategies, there may be benefits from comprehension for the firm's
customers. Customers appear to have a disdain for algorithmic decisions
in certain domains \citep{Dietvorst2015,Dietvorst2022,Yalcin2023},
and providing them a comprehensible explanation may lead them to foster
future goodwill toward the firm. I leave exploring the benefits of
comprehension to future research.

More generally, my framework enables firms to assess the impact of
practical marketing constraints on their objectives of interest. In
my setting, I use profits, or producer surplus, as the objective and
the constraint is the comprehensibility of the targeting policy. The
cost of explanation reflects how this constraint affects the firm\textquoteright s
profits. This framework has the flexibility to explore alternative
objectives like consumer surplus or total surplus and can accommodate
different constraints, such as privacy or fairness considerations. 

My paper links the theoretical targeting and personalization literature
to what is done in practice by accounting for regulatory constraints.
As black box algorithms gain more regulatory scrutiny with increasingly
widespread use of generative artificial intelligence (AI) models and
with the recently enacted AI Act, firms need to navigate the regulatory
environment if they decide to continue to leverage modern advances
in AI for their day-to-day operations. This paper assesses the cost
of right-to-explanation legislation for a firm's targeting policies.
Future research in evaluating the effects of rapidly expanding data
and privacy regulation on both firms and customers is encouraged
to further bridge theory and practice. 
\newpage{}

{\footnotesize\bibliographystyle{ecta}
\bibliography{./Bibliography/bib}
}{\footnotesize\par}

\newpage{}

\section*{Figures}

\begin{figure}[H]
\caption{Methodological overview\label{fig:Methodological-overview}}

\begin{centering}
\includegraphics[width=0.9\textwidth]{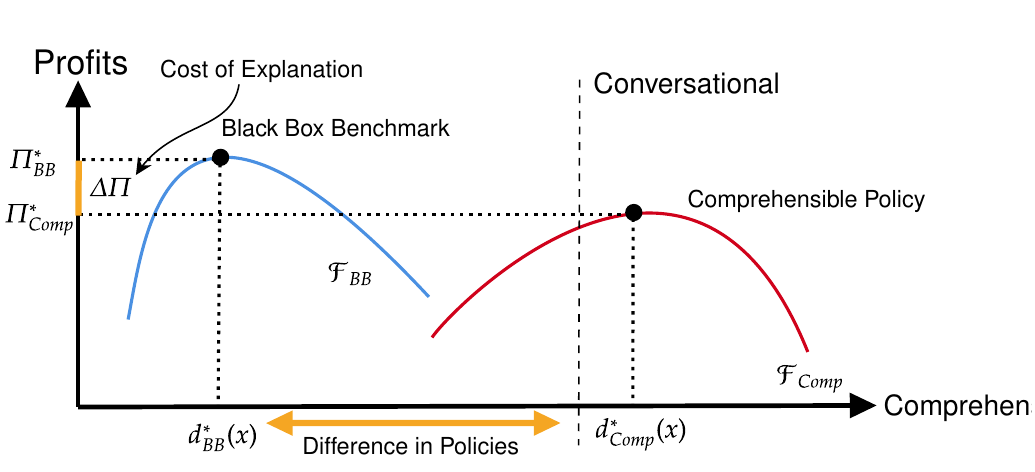}
\par\end{centering}
{\small Note: This figure provides an overview of this paper's methodology
and it is described in Section \ref{sec:Framework}. The axes demonstrate
the trade-off between short-term profitability and comprehensibility.
Comprehensibility can be thought of $1/(\text{Model Complexity})$.
The blue curve on the left represents a class of black box targeting
policies ($\mathcal{F}_{BB}$) with a optimal targeting policy $d_{BB}^{*}(x)$.
Section \ref{sec:Comprehensible-class} traces out a class of comprehensible
policies ($\mathcal{F}_{Comp}$) with the red curve on the right.
Section \ref{sec:Optimal-comprehensible-policy} shows how to find
the optimal comprehensible targeting policy $d_{\text{Comp}}^{*}(x)$
among the set of comprehensible policies. In the empirical application,
Section \ref{subsec:Targeting-Policy-Difference} documents the differences
in $d_{DNN}^{*}(x)$ and $d_{\text{Comp}}^{*}(x)$ which is denoted
here as the differences in policies. Section \ref{subsec:Targeting-Policy-Profit-Difference}
describes the profits differences of $\Delta\Pi=\Pi_{DNN}^{*}-\Pi_{\text{Comp}}^{*}$
which is denoted here as the cost of explanation.}{\small\par}
\end{figure}

\begin{figure}[H]
\begin{centering}
\caption{Distribution of past November sales\label{fig:Generating-Clauses-from-Continuous-Variables}}
\includegraphics[width=0.65\paperwidth]{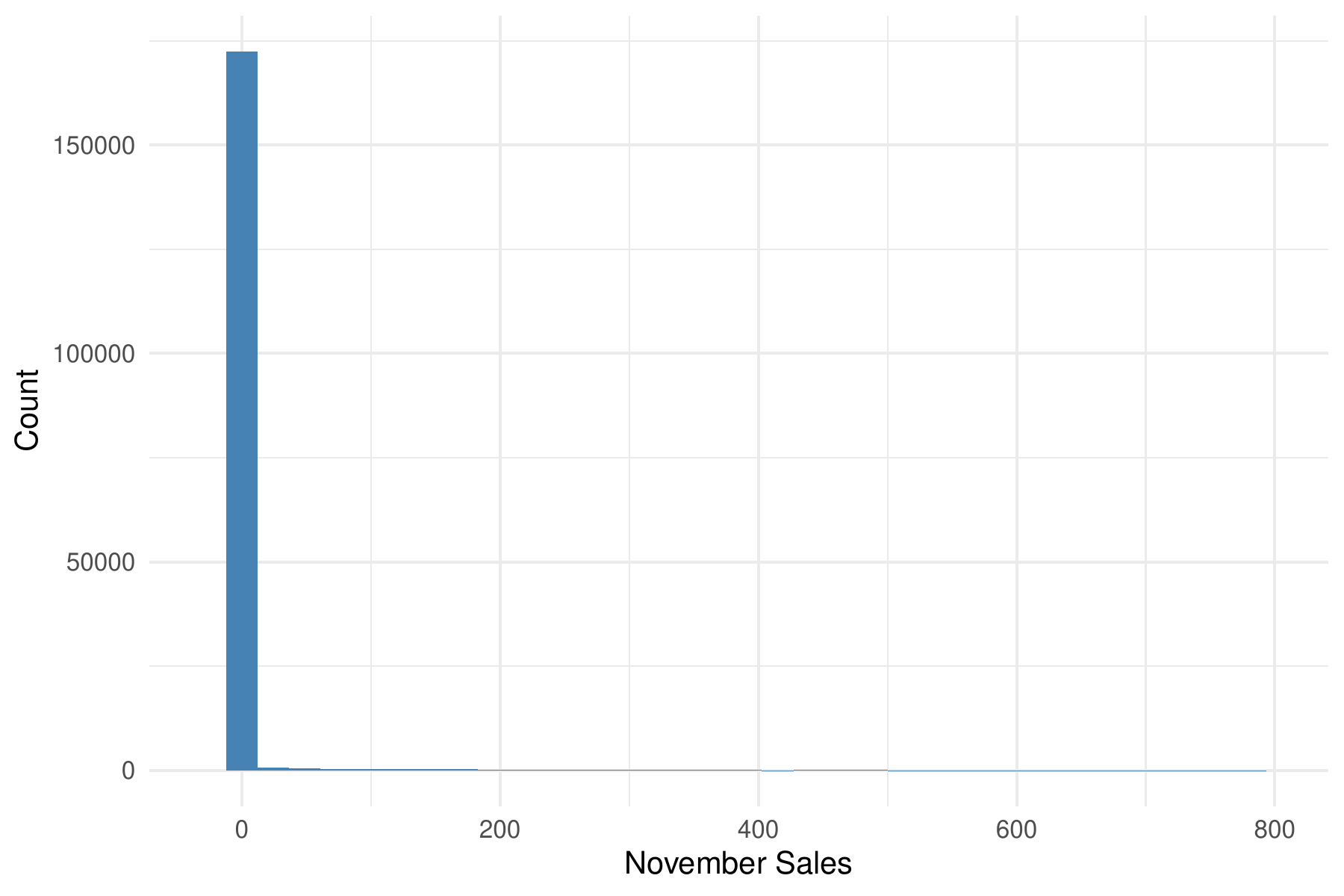}
\par\end{centering}
\begin{centering}
\includegraphics[width=0.65\paperwidth]{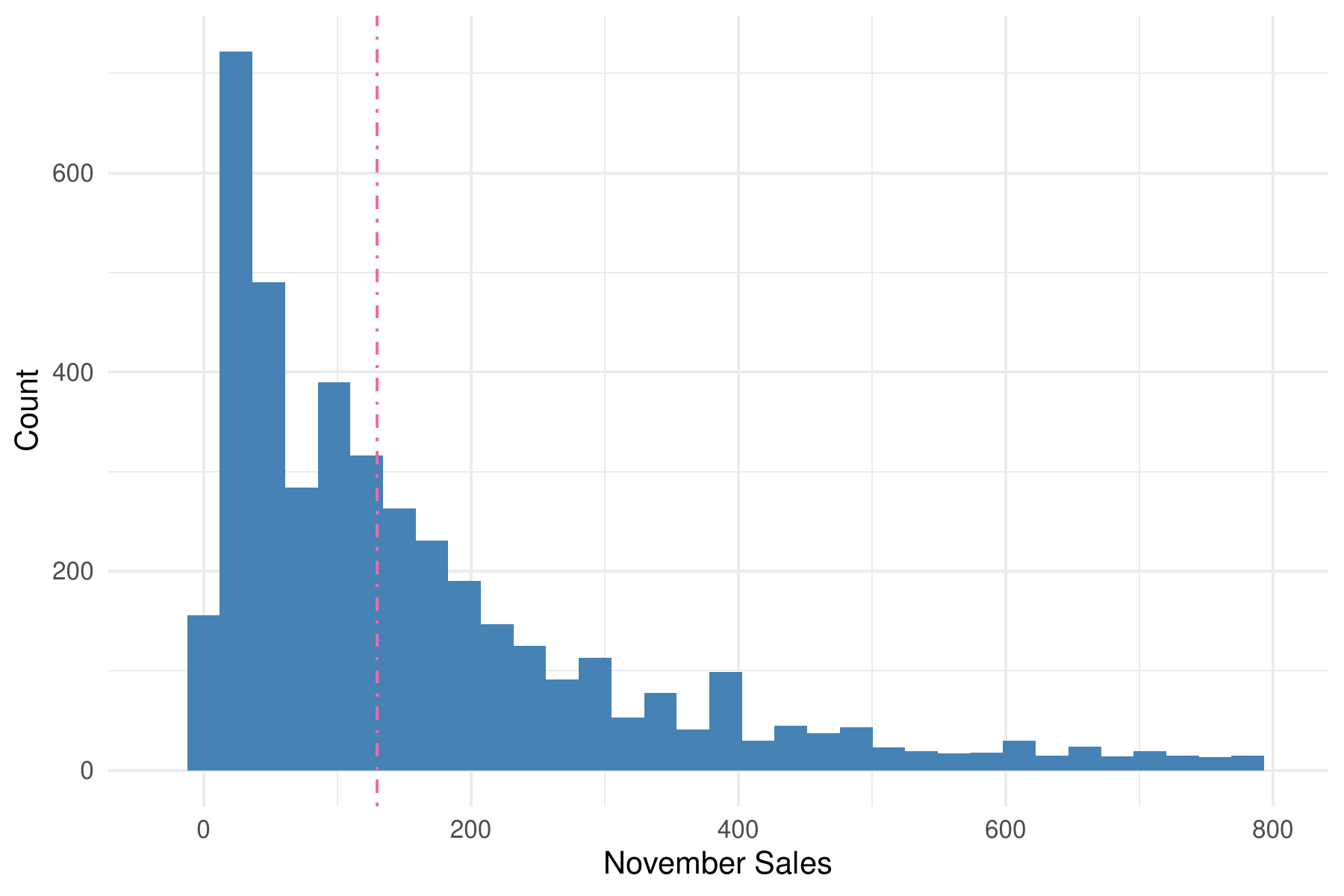}
\par\end{centering}
{\small Note: These two figures plot the distribution of past November
sales variable from the empirical application's RFM dataset. The upper
panel shows the unconditional distribution in which 97\% of the customers
do not buy anything during November. The lower panel shows the conditional
distribution of customers with non-zero spend during November. The
median spend for the conditional distribution is $\$129.99$ and is
denoted by the vertical dashed line.}{\small\par}
\end{figure}

\begin{figure}[H]
\begin{centering}
\caption{Comprehensible policy targeting percentages by clauses\label{fig:Targeting-Percentage}}
\includegraphics[width=0.45\paperwidth]{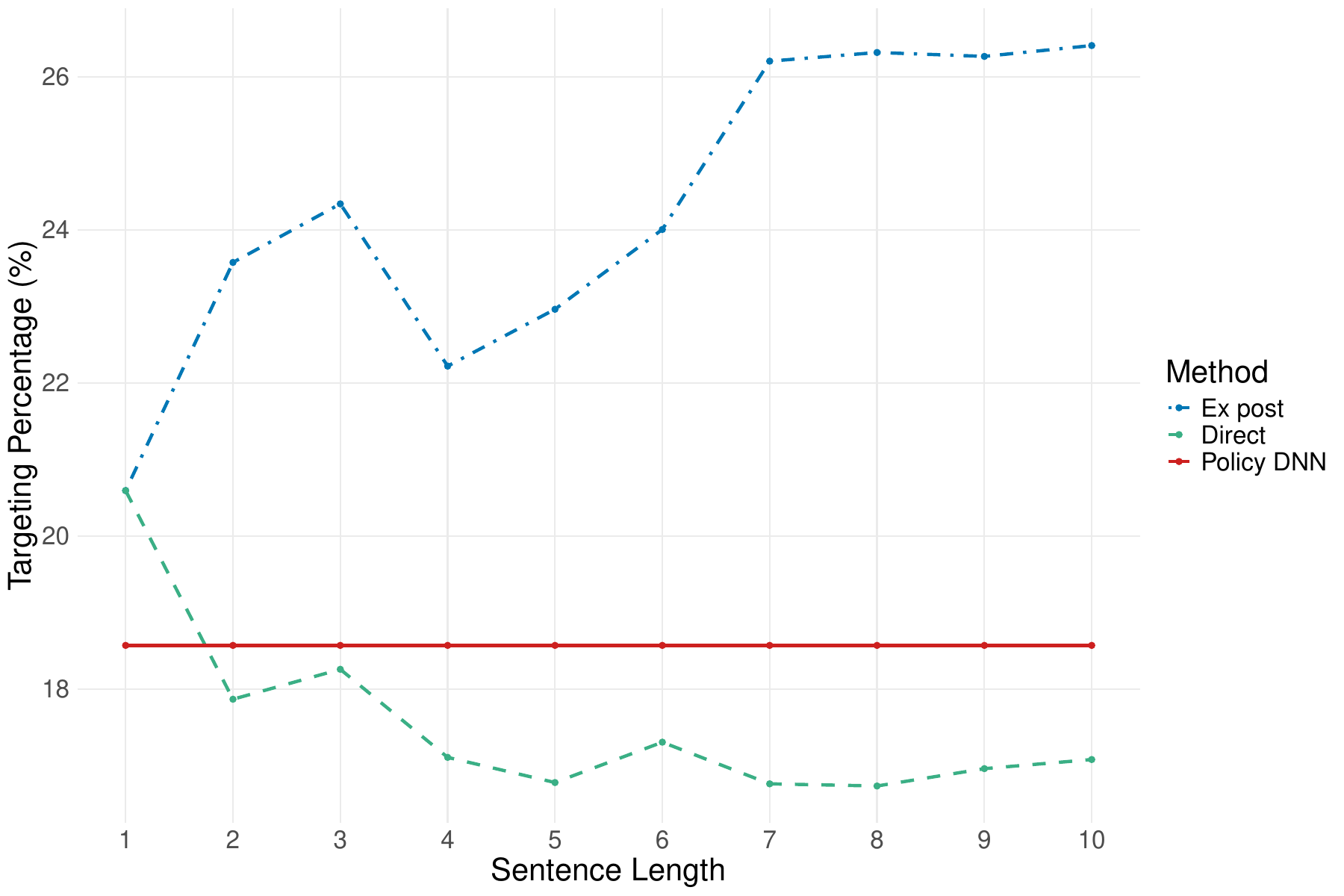}
\par\end{centering}
{\small Note: This figure plots the targeting percentage of the optimal
comprehensible policies by number of clauses in sample. The }{\small\emph{direct}}{\small{}
method learns the comprehensible policy from the data and the }{\small\emph{ex
post}}{\small{} method projects down the Policy DNN down to a comprehensible
policy. The Policy DNN is represented by the dashed horizontal line
and targets $18.1\%$ of customers.}{\small\par}
\end{figure}

\begin{figure}[H]
\begin{centering}
\caption{Two clause comprehensible policy vs. Policy DNN policy \label{fig:Two-Clause-Diagram}}
\includegraphics[width=0.45\paperwidth]{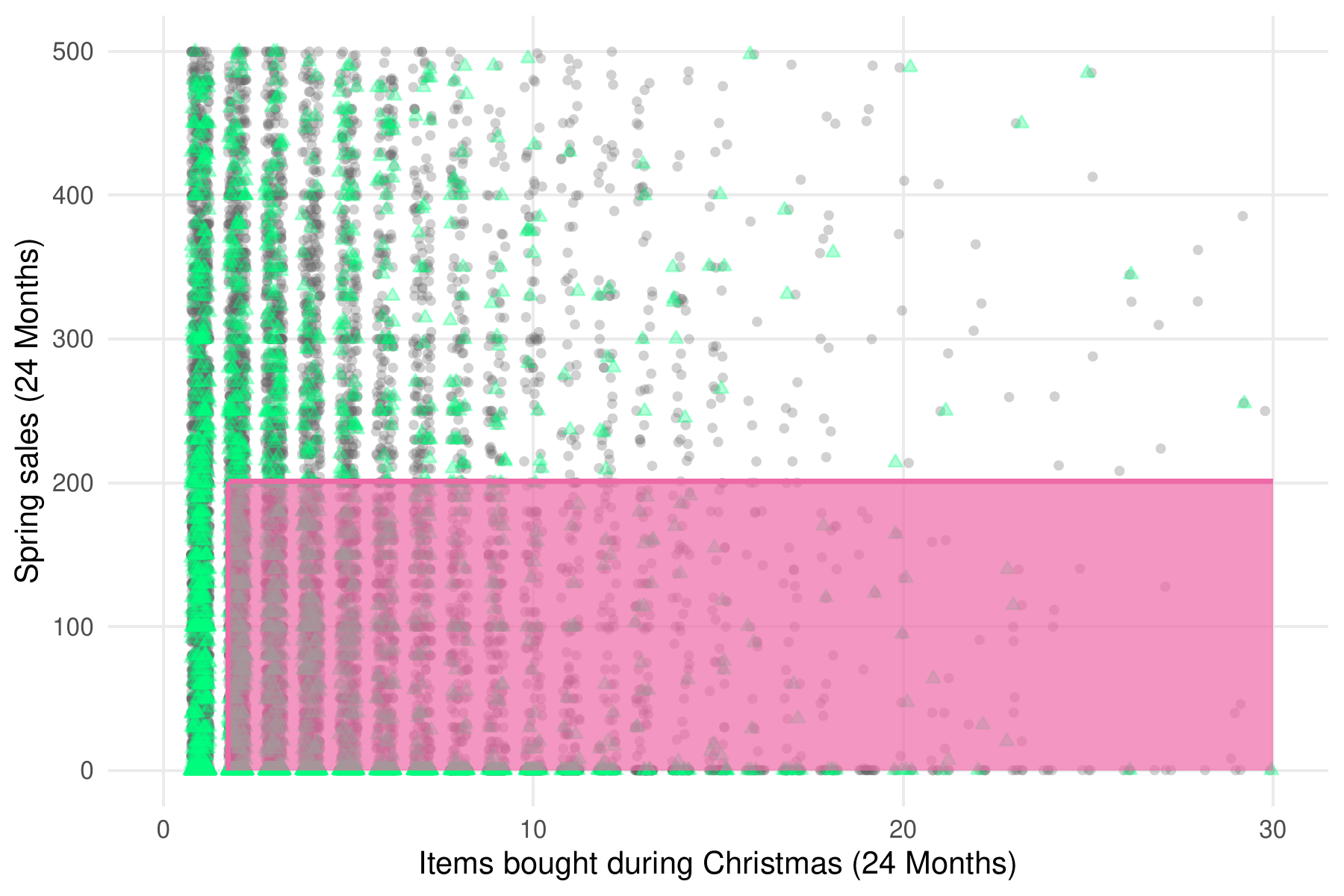}
\par\end{centering}
{\small Note: This figure contrasts the two-clause comprehensible targeting
policy to the Policy DNN targeting policy. The jittered grey circular
points represents the data observations in the raw data for the RFM
covariates spring sales (24 months) and items bought during Christmas
(24 months). The jittered green triangular points represents which
of customers the Policy DNN targets. The points within the pink rectangle
will be targeted by the two-clause comprehensible policy, which is,
``Target if customer has high amount of items during Christmas over
the last two years and did not have high Spring spending over the
last two years''. High amount of Christmas items over two years is
buying two or more items. High spending in spring over the last two
years is spending at least $\$200$. The optimal comprehensible policy
is formed using the }{\small\emph{direct}}{\small{} method that learns
the comprehensible policy from the data.}{\small\par}
\end{figure}

\begin{figure}[H]
\begin{centering}
\caption{Individual expected profits by method \label{fig:Individual-Expected-Profits}}
\subfloat[In sample expected profits]{\centering{}\includegraphics[width=0.6\paperwidth]{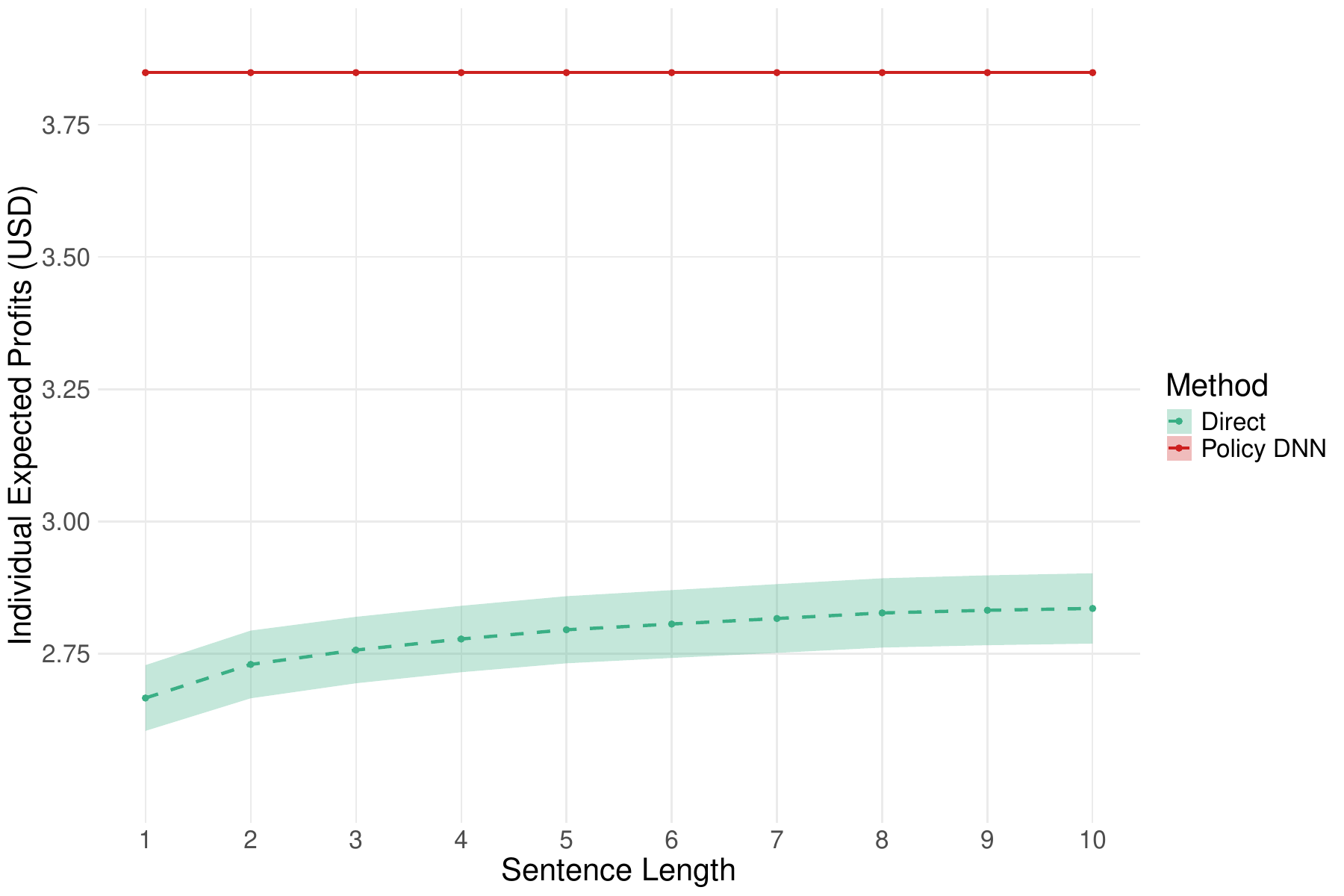}}\medskip{}
\par\end{centering}
\begin{centering}
\subfloat[Out of sample expected profits]{\centering{}\includegraphics[width=0.6\paperwidth]{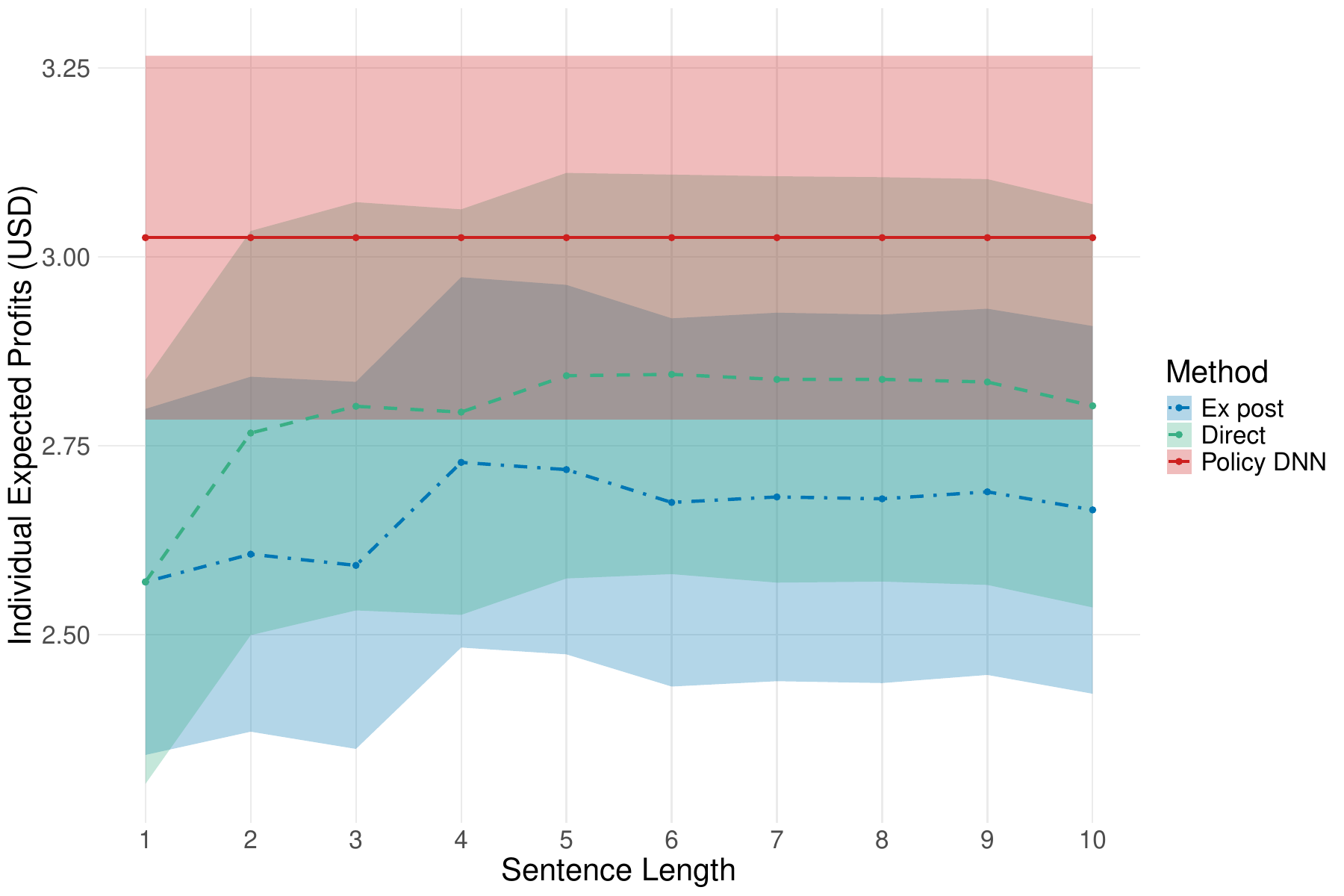}}\medskip{}
\par\end{centering}
{\small Note: These figures plots the expected profits by the number
of clauses. The bands represent one standard error. The upper panel
is the in sample expected profits and the bottom panel is the out
of sample expected profits. In both panels, the }{\small\emph{direct}}{\small{}
method learns the comprehensible policy from the data and the }{\small\emph{ex
post}}{\small{} method projects down the Policy DNN down to a comprehensible
policy. Policy DNN is represented by the dashed horizontal line and
there are no analytical standard errors for Policy DNN in sample. }{\small\par}
\end{figure}

\section*{Tables}

\begin{table}[H]
\begin{centering}
\caption{Individual expected profits (out of sample)\label{tab:Black-Box-Profits}}
\begin{tabular}{c|cc}
Method & Individual Profits & SE\tabularnewline
\hline 
Policy DNN & 3.026 & 0.241\tabularnewline
Causal DNN & 2.919 & 0.279\tabularnewline
Comprehensible Policy & 2.802 & 0.270\tabularnewline
Causal Forest & 2.778 & 0.260\tabularnewline
Policy Tree & 2.717 & 0.251\tabularnewline
Lasso & 2.702 & 0.267\tabularnewline
Blanket & 2.419 & 0.227\tabularnewline
\end{tabular}\medskip{}
\par\end{centering}
{\small Note: These are the out of sample, expected profits for an
individual for different optimal targeting policy $d^{*}(x)$ methods.
The blanket targeting policy represents the expected profits if everyone
was mailed the promotion. The standard errors for the Policy DNN are
computed analytically. The standard errors for the other models are
computed via bootstrap. Policy Tree is a decision tree is learned
by projecting down the Causal Forest and has depth two. The optimal
comprehensible policy uses three clauses and is formed using the }{\small\emph{direct}}{\small{}
method that learns the comprehensible policy from the data.}{\small\par}
\end{table}

\begin{table}[H]
\begin{centering}
\caption{Targeting policy differences\label{tab:Targeting-policy-differences}}
\begin{tabular}{c|cc|c}
 & \multicolumn{2}{c|}{\textbf{Policy DNN}} & \tabularnewline
\textbf{Comprehensible Policy} & Targeted & Not targeted & \tabularnewline
\hline 
Targeted & 9,997 & 18,762 & 28,759\tabularnewline
Not targeted & 19,491 & 110,556 & 130,047\tabularnewline
\hline 
 & 29,488 & 129,318 & 158,806\tabularnewline
\end{tabular}\medskip{}
\par\end{centering}
{\small Note: This is a confusion table for the targeting policy differences
for the optimal three-clause comprehensible policy and for the Policy
DNN policy. The Policy DNN targets 18.1\% of customers and the comprehensible
policy targets 18.3\% of the customers. The two targeting policies
overlap on 75.9\% of customers. The optimal comprehensible policy
is formed using the }{\small\emph{direct}}{\small{} method that learns
the comprehensible policy from the data.}{\small\par}
\end{table}

\begin{table}[H]
\begin{centering}
\caption{Cost of explanation for three-clause comprehensible policies\label{tab:Cost-of-Explanation}}
{\footnotesize{}%
\begin{tabular}{c|cc|cc|c}
 & \multicolumn{2}{c|}{{\footnotesize\textbf{In sample}}} & \multicolumn{2}{c|}{{\footnotesize\textbf{Out of sample}}} & {\footnotesize\textbf{Comprehensible Targeting Policy}}\tabularnewline
 & {\footnotesize Profits} & {\footnotesize CoE} & {\footnotesize Profits} & {\footnotesize CoE} & {\footnotesize Target customer if:}\tabularnewline
\hline 
\multirow{3}{*}{{\footnotesize\textbf{Direct}}} & \multirow{3}{*}{{\footnotesize$\$2.76$}} & \multirow{3}{*}{{\footnotesize$\$1.09$}} & \multirow{3}{*}{{\footnotesize$\$2.80$}} & \multirow{3}{*}{{\footnotesize$23\cent$}} & {\footnotesize high XMAS items (2Y)}\tabularnewline
 &  &  &  &  & {\footnotesize\textbf{and }}{\footnotesize\emph{not}}{\footnotesize{}
high Spring sales (2Y)}\tabularnewline
 &  &  &  &  & {\footnotesize\textbf{or}}{\footnotesize{} low spend last year holiday
promo}\tabularnewline
\hline 
\multirow{3}{*}{{\footnotesize\textbf{Ex post}}} & \multirow{3}{*}{{\footnotesize -}} & \multirow{3}{*}{{\footnotesize -}} & \multirow{3}{*}{{\footnotesize$\$2.59$}} & \multirow{3}{*}{{\footnotesize$53\cent$}} & {\footnotesize high XMAS sales (2Y)}\tabularnewline
 &  &  &  &  & {\footnotesize\textbf{or}}{\footnotesize{} high Back to School items
(2Y)}\tabularnewline
 &  &  &  &  & {\footnotesize\textbf{or}}{\footnotesize{} no two years ago Expo promotion
items }\tabularnewline
\hline 
{\footnotesize\textbf{Policy DNN}} & {\footnotesize$\$3.84$} & {\footnotesize\textendash{}} & {\footnotesize$\$3.03$} & {\footnotesize\textendash{}} & \tabularnewline
\end{tabular}}\medskip{}
\par\end{centering}
{\small Note: This table provides the cost of explanation (CoE) of
the three-clause optimal comprehensible policies and states their
targeting policy. The }{\small\emph{direct}}{\small{} method learns
the comprehensible policy from the data and the }{\small\emph{ex post}}{\small{}
method projects down the Policy DNN down to a comprehensible policy. }{\small\par}
\end{table}
\newpage{}

\appendix

\section*{Appendix}

\section{Comprehensible policies and decision trees\label{sec:Decision-Trees-and-Sentences}}

In this section, I demonstrate that comprehensible policies are subsets
of decision trees and show a comprehensible policy with a fixed number
of clauses has a finite Vapnik-Chervonenkis dimension. In Section
\ref{sec:Comprehensible-class}, I constructed the comprehensible
policy class to be targeting policies that can be represented by a
sentence. A decision tree of depth $\ell$ is typically more complex
than a comprehensible policy of $\ell$ clauses; a sentence of $\ell$
clauses can be represented by a tree of depth $\ell$ by not vice
versa. To show this, I visualize how to construct the decisions trees
from comprehensible policies for one to three clauses and the provide
an algorithm to generally do so. I then leverage the link from comprehensible
policies to decisions trees to control the Vapnik\textendash Chervonenkis
(VC) dimension of comprehensible policies.
\begin{lem}
\label{lem:Comprehensible-Sentence-Decision-Tree} A comprehensible
policy of length $\ell$ can be represented by a full decision tree
of depth $\ell$.
\end{lem}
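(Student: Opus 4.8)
The plan is to proceed by induction on the number of clauses $\ell$, exploiting the sequential (left-associative) structure of a comprehensible policy that is implicit in the greedy construction of Section \ref{sec:Optimal-comprehensible-policy}. Writing $L_j$ for the $j$-th literal (a clause $C_j$ or its negation $\text{not}\,C_j$) and $\circ_j \in \{\text{and},\text{or},\text{xor}\}$ for the $j$-th logic operator, any comprehensible policy of length $\ell$ parses as $P_\ell = P_{\ell-1} \circ_\ell L_\ell$, where $P_{\ell-1}$ is the policy formed from the first $\ell-1$ literals and operators. This recursion is precisely the order in which the greedy algorithm appends clauses, so it is the natural structure on which to induct.

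For the base case $\ell=1$, the policy is a single literal $L_1$, computed by the depth-one tree that tests the clause $C_1$ at the root and sends its two leaves to ``target'' and ``not target'' (swapping the labels when $L_1$ is negated). This is a full binary tree of depth one with $2^1$ leaves. For the inductive step I would assume $P_{\ell-1}$ is represented by a full decision tree $T_{\ell-1}$ of depth $\ell-1$, hence with $2^{\ell-1}$ leaves, each carrying a fixed output value determined by the tests along its root-to-leaf path. To build a tree for $P_\ell$, I extend every leaf of $T_{\ell-1}$ by attaching an internal node that tests the literal $L_\ell$, creating two children; at each child I set the output to $v \circ_\ell b$, where $v \in \{0,1\}$ is the constant value of $P_{\ell-1}$ at the parent leaf and $b \in \{0,1\}$ is the outcome of the $L_\ell$ test. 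By construction the tree outputs $P_{\ell-1} \circ_\ell L_\ell = P_\ell$ on every input, and since I extend all $2^{\ell-1}$ leaves it has $2^\ell$ leaves, all at depth $\ell$, completing the induction.

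The construction is indifferent to which operator $\circ_\ell$ appears: because $\text{and}$, $\text{or}$, and $\text{xor}$ are all binary Boolean functions of $(v,b)$, the leaf-labeling step handles them uniformly, and negation is absorbed by relabeling a node's two children. I expect the only point requiring genuine care to be \emph{fullness}. Even when the appended test is logically redundant—for example when $\circ_\ell=\text{and}$ and $v=0$, so both children output $0$—I still attach the test at that leaf. This deliberate redundancy is what keeps every leaf at depth exactly $\ell$ and delivers the complete binary tree of depth $\ell$ claimed in the statement, rather than a collapsed or unbalanced tree. The reverse inclusion (that a depth-$\ell$ tree need not correspond to any $\ell$-clause sentence) is a separate matter, established by the accompanying corollary on trees and sentences.
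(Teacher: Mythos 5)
Your proof is correct and takes essentially the same route as the paper's: the paper also argues by construction, appending one clause at a time, with its appendix algorithms for ``and,'' ``or,'' and ``xor'' playing exactly the role of your inductive step. Your uniform leaf-labeling $v \circ_\ell b$ cleanly subsumes the paper's three operator-specific growth rules (the only cosmetic difference being that you split every leaf, yielding a perfect depth-$\ell$ tree, whereas the paper splits only the leaves where the new test can change the output).
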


\begin{proof}
I provide a proof by construction. The $\ell=1$ case is straightforward
and I show the representation explicitly for a comprehensible policy
of $\ell=2$ clauses and $\ell=3$ clauses. Then, I present algorithms
to map a comprehensible policy of length $\ell$ to a decision tree
of depth $\ell$. 

A one clause comprehensible policy can be represented as a decision
tree of depth one which has one split. Figure \ref{fig:Comprehensible-policies-and-decision-trees-2}
and Figure \ref{fig:Comprehensible-policies-and-decision-trees-3}
provide the mapping between a comprehensible policy of $\ell=2$ clauses
and $\ell=3$ clauses to their respective decision trees. A key result
in these two examples is that a two clauses comprehensible policy
can be represented by a decision tree of depth two and a three clauses
comprehensible policy can be represented by a decision tree of depth
three. 

To complete the construction, Algorithm \ref{alg:Add-and}, \ref{alg:Add-or},
and \ref{alg:Add-xor} show how to grow the decision tree when ``and'',
``or'', and ``xor'' operators and a clause are added a comprehensible
policy. Adding one additional clause the the comprehensible policy
increases the decision tree by an extra level of depth. Thus, a comprehensible
policy of length $\ell$ can be represented by a decision tree of
depth $l$.
\end{proof}
\begin{cor}
\label{cor:Tree-to-Sentence}A decision tree of depth $\ell$ cannot
always be represented by a comprehensible policy of length $\ell$.
\end{cor}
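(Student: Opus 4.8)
The plan is to establish the corollary as a strict-containment result. Lemma~\ref{lem:Comprehensible-Sentence-Decision-Tree} already embeds every length-$\ell$ comprehensible policy into the class of depth-$\ell$ decision trees, so it suffices to exhibit a single depth-$\ell$ tree whose induced targeting function lies outside $\mathcal{F}_{\text{comp}}^{\ell}$. I would prove this by a feature-counting argument combined with one explicit counterexample, which is both short and illuminates \emph{why} the reverse embedding fails.

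The key structural observation I would isolate first is a ``feature budget'' for comprehensible policies. By the construction in Section~\ref{sec:Comprehensible-class}, a policy in $\mathcal{F}_{\text{comp}}^{\ell}$ is a Boolean combination of exactly $\ell$ clauses (each a binary indicator, possibly negated) joined by $\ell-1$ logic operators drawn from \emph{and}, \emph{or}, \emph{xor}. Regardless of which operators and negations appear, such a sentence references at most $\ell$ distinct binary clauses, so the resulting targeting rule can depend essentially on at most $\ell$ distinct features. A depth-$\ell$ decision tree, by contrast, has up to $2^{\ell}-1$ internal nodes that may each split on a different feature, so for $\ell\ge 2$ it can compute functions that depend essentially on strictly more than $\ell$ features. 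This asymmetry is the engine of the proof.

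With that in hand, I would give the minimal witness at $\ell=2$. Consider three binary features $X_{1},X_{2},X_{3}$ and the depth-two tree that splits on $X_{1}$ at the root, returns the value of $X_{2}$ in the branch $\{X_{1}=1\}$, and returns the value of $X_{3}$ in the branch $\{X_{1}=0\}$; this tree computes the multiplexer (if-then-else) function $f=(X_{1}\wedge X_{2})\vee(\neg X_{1}\wedge X_{3})$. A direct check shows that $f$ depends essentially on all three features\textemdash flipping any one of $X_{1},X_{2},X_{3}$ alters $f$ for some setting of the other two\textemdash so by the feature-budget observation no length-$2$ comprehensible policy, which can reference only two clauses, can equal $f$. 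The same template extends to any $\ell\ge 2$ by building a depth-$\ell$ tree whose leaves read off an $(\ell+1)$-st fresh feature, overflowing the $\ell$-clause budget. The main obstacle is therefore not the counterexample itself but formalizing the feature-budget claim cleanly from the sentence grammar of Section~\ref{sec:Comprehensible-class}: I must argue that operators and negations cannot manufacture essential dependence on a feature whose clause never appears in the sentence, so that ``at most $\ell$ clauses'' rigorously implies ``essential dependence on at most $\ell$ features.''
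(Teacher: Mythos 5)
Your proposal is correct and takes essentially the same route as the paper, whose proof is a two-line counting argument: a depth-$2$ tree can use three distinct clauses while a length-$2$ sentence can reference only two. You go further than the paper by supplying an explicit witness (the multiplexer $(X_{1}\wedge X_{2})\vee(\neg X_{1}\wedge X_{3})$) and verifying essential dependence on all three clauses\textemdash a step the paper leaves implicit but which is needed to rule out that some \emph{other} two-clause sentence computes the same function\textemdash so your write-up is, if anything, a more complete version of the same argument.
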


\begin{proof}
Proof by counterexample. A decision tree of depth $2$ can use three
different clauses. A comprehensible policy of length 2 can only use
two clauses.
\end{proof}
\begin{lem}
\label{lem:Comprehensible-Sentence-Finite-VC-Dim}A comprehensible
policy of finite length $\ell\in\mathbb{N}$ has a finite Vapnik\textendash Chervonenkis
(VC) dimension for finite number of covariates $p\in\mathbb{N}$.
\end{lem}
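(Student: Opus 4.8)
The plan is to prove the stronger quantitative statement that $\mathcal{F}_{\text{comp}}^{\ell}$ is a \emph{finite} hypothesis class and then invoke the elementary cardinality bound on VC dimension. First I would verify that the pool of candidate clauses is finite. Each of the $p$ covariates contributes only finitely many clauses: a binary covariate yields a single clause, while a categorical or discretized continuous covariate yields its finitely many ($q$) indicator clauses, as described in Section \ref{subsec:Generating-clauses}. Allowing negations doubles this count, so the total number of admissible clauses is at most $2pq < \infty$ whenever $p$ and $q$ are finite.

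Second, I would count the policies. A comprehensible policy of length $\ell$ is built by choosing $\ell$ clauses from the finite pool and joining them with $\ell-1$ operators, each drawn from $\{\text{and}, \text{or}, \text{xor}\}$. This gives at most $3^{\ell-1}(2pq)^{\ell}$ distinct sentences, exactly the combinatorial count already derived in Section \ref{subsec:Comprehensible-policies}. Since $\ell$, $p$, and $q$ are all finite, the induced class of functions satisfies $|\mathcal{F}_{\text{comp}}^{\ell}| \leq 3^{\ell-1}(2pq)^{\ell} < \infty$. Third I would apply the standard fact that a finite class cannot shatter more than $\log_2|\mathcal{H}|$ points: shattering a set of $m$ points requires realizing all $2^m$ labelings, hence at least $2^m$ distinct functions, so $2^m \leq |\mathcal{H}|$. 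Combining,
\[
VC\!\left(\mathcal{F}_{\text{comp}}^{\ell}\right) \;\leq\; \log_2\!\left|\mathcal{F}_{\text{comp}}^{\ell}\right| \;\leq\; (\ell-1)\log_2 3 + \ell\,\log_2(2pq) \;<\; \infty,
\]
which establishes the claim.

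The step requiring care is that distinct sentences need not induce distinct targeting functions: logical equivalences and the commutativity of the operators mean $3^{\ell-1}(2pq)^{\ell}$ over-counts the number of functions in $\mathcal{F}_{\text{comp}}^{\ell}$. This is harmless here because I only need an \emph{upper} bound on $|\mathcal{F}_{\text{comp}}^{\ell}|$, and any upper bound on cardinality transfers directly to an upper bound on VC dimension; the over-counting only loosens the constant. I would also flag that finiteness of $q$ (not just of $p$) is implicitly needed, and note it is guaranteed by the discretization scheme, which partitions each continuous covariate into a fixed, finite number of bins.

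As an alternative route that avoids counting altogether, I could instead appeal to Lemma \ref{lem:Comprehensible-Sentence-Decision-Tree}: every length-$\ell$ comprehensible policy is realized by some depth-$\ell$ binary decision tree over the finitely many clause features, so $\mathcal{F}_{\text{comp}}^{\ell}$ is contained in the class of such trees, whose VC dimension is finite; monotonicity of VC dimension under class inclusion then yields finiteness immediately. I would lead with the finite-class argument since it is fully self-contained and produces an explicit bound, relegating the decision-tree embedding to a remark.
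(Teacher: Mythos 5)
Your proposal is correct, and it takes a genuinely different route from the paper. The paper proves finiteness by embedding: it invokes Lemma \ref{lem:Comprehensible-Sentence-Decision-Tree} to represent any $\ell$-clause sentence as a depth-$\ell$ decision tree, then cites the \citet{Athey2021} asymptotic VC bound for trees, $VC(d_{DT})=\tilde{\mathcal{O}}(2^{l_{n}}\log_{2}(p))$, and truncates the depth at the fixed $\ell$ (choosing $l_{n}'=\min\{l_{n},\ell\}$) to sever the dependence on $n$ and conclude $VC(d_{\text{comp}})\leq VC(d_{DT})\leq C'$. You instead prove the stronger statement that $\mathcal{F}_{\text{comp}}^{\ell}$ is a finite class of at most $3^{\ell-1}(2pq)^{\ell}$ functions and apply the elementary bound $VC(\mathcal{H})\leq\log_{2}|\mathcal{H}|$, which is valid (over-counting by logical equivalence is harmless for an upper bound, as you note). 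Your argument buys self-containedness and an explicit, nonasymptotic constant, $VC(\mathcal{F}_{\text{comp}}^{\ell})\leq(\ell-1)\log_{2}3+\ell\log_{2}(2pq)$, avoiding the external $\tilde{\mathcal{O}}$ machinery entirely; you also correctly surface the implicit hypothesis that $q$ is finite, which the lemma statement omits but the discretization scheme of Section \ref{subsec:Generating-clauses} guarantees. What the paper's route buys is robustness to that very hypothesis: a VC bound via trees does not require a finite clause pool, so it would survive a richer clause-generation scheme (e.g., threshold clauses with arbitrary cutoffs on continuous covariates), and it reuses the sentence-to-tree construction the paper needs elsewhere. Your closing remark sketching the decision-tree embedding as an alternative is, in fact, precisely the paper's proof.
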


\begin{proof}
\citet{Athey2021} supply the asymptotic VC dimension for a decision
tree of $l$ layers, $p$ covariates, and observations $n$ as $VC(d_{DT})=\tilde{\mathcal{O}}(2^{l_{n}}\log_{2}(n))$
where $l_{n}=\lfloor\kappa\log_{2}(p)\rfloor,\kappa<1/2$. Here, $f(n)=\tilde{\mathcal{O}}(g(n))$
implies there is a function that scales polylogarithmically in its
arguments for $f(n)<h(g(n))g(n)$ where $g(n)$ is at least polynomial
in $n$. 

Lemma \ref{lem:Comprehensible-Sentence-Decision-Tree} shows that
a comprehensible policy of length $\ell$ can be represented by a
decision tree of depth $\ell$. This result implies that the VC dimension
of the full decision tree of depth $\ell$ will be an upper bound
for that of the comprehensible policy, 
\[
VC(d_{\text{comp}})\leq VC(d_{DT})=\tilde{\mathcal{O}}(2^{l_{n}}\log_{2}(p)).
\]

However since the number of clauses (and depth of the decision tree)
$\ell$ is finite in my setting, I choose $l_{n}'=\min\{l_{n},\ell\}\leq\ell$
for the upper bound of the decision tree's depth. This choice of the
model structure breaks the dependence of the VC dimension of the tree
to $n$ as $2^{l}\log_{2}(p)=C$ for finite $p$. Then, I see that
$VC(d_{\text{comp}})\leq VC(d_{DT})\leq C'$ where $d_{\text{comp}}$
has $\ell$ clauses and $d_{DT}$ has depth $\ell$. Thus, a comprehensible
sentence of finite length $\ell$ has finite VC dimension.
\end{proof}

\section{Supporting proofs for Section \ref{sec:Ex-post-policy} \label{sec:Proofs-for-XAI}}

I provide the proof for Theorem \ref{thm:Convergence-rate-of-ex-post-comprehensible-policies}
in Section \ref{sec:Ex-post-policy}. I restate the theorem and related
definitions for ease of exposition. I define $\pi(d)$ to be the individual-level
profits from targeting policy $d$, and then define
\begin{align*}
\check{d}^{*}(x) & =\arg\max_{d\in\mathcal{F}_{\text{comp}}}E_{P}\left[\left|\pi(d_{BB}^{*})-\pi(d)\right|\right]\\
\hat{d}_{ex}(x) & =\arg\min_{d\in\mathcal{F}_{\text{comp}}}E_{n}\left[\left|\pi(d_{BB}^{*})-\pi(d)\right|\right],
\end{align*}
where the first line has the expectation taken over the population
distribution $P$ and the second line has the expectation taken over
the sample analog. The \emph{ex post} optimal comprehensible policy
is $\hat{d}_{ex}(x)$ and the black box policy is $d_{BB}^{*}(x)$.
I further define $\Gamma(d)=E_{P}\left[\left|\pi(d_{BB}^{*})-\pi(d)\right|\right]$
to be the profit loss for the targeting rule $d$ evaluated in the
population and $\Gamma_{n}(d)=E_{n}\left[\left|\pi(d_{BB})-\pi(d)\right|\right]$
to be the profit loss for targeting rule $d$ in the sample. I suppress
the dependence on $x$ for notational simplicity.
\begin{thm*}
(Uniform convergence rate of ex post comprehensible policies) Under
Assumptions \ref{assu:Unconfoundedness}, \ref{assu:Overlap}, and
\ref{assu:SUTVA} and the assumption that the outcome variable ($Y$)
is bounded, for a hypothesis space of comprehensible policies with
$\ell$ clauses ($\mathcal{H}=\mathcal{F}_{\text{comp}}^{\ell}$)
that has bounded VC dimension ($VC(\mathcal{H})<V<\infty$),
\[
\sup_{P}E_{P}\left[\Gamma(\check{d}^{*})-\Gamma(\hat{d}_{ex})\right]\leq C\sqrt{\frac{V}{n}}=O_{p}\left(1/\sqrt{n}\right).
\]
\end{thm*}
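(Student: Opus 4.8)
The plan is to recast the \emph{ex post} objective as a weighted binary classification risk and then transfer the uniform convergence machinery of \citet{Kitagawa2018} to this recentered problem. The key observation is that, because $d_{BB}^{*}$ is estimated on a held-out fold and is therefore fixed when analyzing the inference sample, the only object varying over $\mathcal{H}=\mathcal{F}_{\text{comp}}^{\ell}$ inside $\Gamma$ is the comprehensible rule $d$. I would first show that the integrand collapses to the classification-loss-times-weight form already recorded in Equation \ref{eq:IPWE-Profit-Difference}: writing $g_i=\frac{W_i}{e(x_i)}\pi_i(1)-\frac{1-W_i}{1-e(x_i)}\pi_i(0)$, the individual profit $\pi(d)$ is affine in $d(x_i)\in\{0,1\}$, so that $\pi(d_{BB}^{*})-\pi(d)=(d_{BB}^{*}(x_i)-d(x_i))\,g_i$ and hence $|\pi(d_{BB}^{*})-\pi(d)|=\mathbf{1}\{d(x_i)\neq d_{BB}^{*}(x_i)\}\,|g_i|$. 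Thus $\Gamma(d)=E_P[\mathbf{1}\{d\neq d_{BB}^{*}\}\,|g|]$ is exactly a weighted misclassification risk against the fixed label $d_{BB}^{*}$, which is the sense in which the \citet{Kitagawa2018} welfare functional has been \emph{recentered} at the black box profit.

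With this reduction in hand the argument follows the standard empirical welfare maximization template. First I would establish boundedness: the assumption that $Y$ is bounded, together with the overlap condition (Assumption \ref{assu:Overlap}, strengthened so $e(x)$ is bounded away from $0$ and $1$) and the fixed margin and cost, implies $|g_i|\le M$ for a constant $M$ uniform in $P$, so the loss is uniformly bounded. Second I would control complexity: the relabeled class $\{\mathbf{1}\{d\neq d_{BB}^{*}\}:d\in\mathcal{H}\}$ is obtained from $\mathcal{H}$ by taking the symmetric difference of each $\{d=1\}$ with the fixed set $\{d_{BB}^{*}=1\}$, an operation that leaves the Vapnik--Chervonenkis dimension unchanged, so this class still has VC dimension at most $V<\infty$ by Lemma \ref{lem:Comprehensible-Sentence-Finite-VC-Dim}. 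Since $|g|$ is a fixed (data-point) function not depending on $d$, multiplying these indicators by $|g|$ preserves the bounded-envelope VC-subgraph structure and hence the polynomial covering numbers.

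These two facts are precisely the inputs needed to invoke the recentered version of the maximal inequality in Appendix B of \citet{Kitagawa2018}. Applying it to the weighted-loss class yields $\sup_P E_P\big[\sup_{d\in\mathcal{H}}|\Gamma_n(d)-\Gamma(d)|\big]\le C'\sqrt{V/n}$. The regret statement then follows from the usual two-term decomposition: letting $\check{d}^{*}$ and $\hat{d}_{ex}$ be the population and sample optimizers of the recentered objective, the excess risk $\Gamma(\check{d}^{*})-\Gamma(\hat{d}_{ex})$ is sandwiched by $2\sup_{d\in\mathcal{H}}|\Gamma_n(d)-\Gamma(d)|$, using optimality of $\hat{d}_{ex}$ in the sample and of $\check{d}^{*}$ in the population to cancel the middle term. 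This delivers the claimed bound $C\sqrt{V/n}=O_p(1/\sqrt{n})$.

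The step I expect to be the main obstacle is verifying that recentering around the data-dependent quantity $\pi(d_{BB}^{*})$ neither inflates the complexity of the loss class nor breaks the $\sup_P$ uniformity. The clean resolution is sample splitting, which makes $d_{BB}^{*}$ measurable with respect to an independent fold, so we may condition on it and treat it as a fixed labeling; the symmetric-difference argument above then shows the VC dimension is inherited from $\mathcal{H}$. The remaining care is to confirm that $|g_i|$ admits a bound uniform over the class of distributions $P$, which is exactly where strong overlap and boundedness of $Y$ enter. Once the loss class is certified as a uniformly bounded VC class, the transfer of the \citet{Kitagawa2018} maximal inequality and the regret decomposition are routine.
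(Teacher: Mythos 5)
Your proposal follows essentially the same route as the paper's proof: the same two-term regret decomposition sandwiching $\Gamma(\check{d}^{*})-\Gamma(\hat{d}_{ex})$ by $2\sup_{d\in\mathcal{F}_{\text{comp}}^{\ell}}\left|\Gamma_{n}(d)-\Gamma(d)\right|$, the same reliance on sample splitting to treat $d_{BB}^{*}$ as fixed, boundedness of the loss via bounded $Y$ plus strict overlap, finite VC dimension from Lemma \ref{lem:Comprehensible-Sentence-Finite-VC-Dim}, and an appeal to the maximal inequality of \citet{Kitagawa2018} (their Lemma A.4), followed by the supremum over $P$. Your only additions are to make explicit two steps the paper leaves implicit\textemdash the reduction of $\Gamma$ to a weighted misclassification risk (which the paper derives separately in Appendix Section \ref{sec:Profit-loss-derivation}) and the observation that relabeling by symmetric difference with the fixed set $\{d_{BB}^{*}=1\}$ preserves the VC dimension\textemdash both of which are correct and strengthen rather than alter the argument.
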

\begin{proof}
I extend the analogous proof of Theorem 2.1 in \citet{Kitagawa2018}
to provide the upper bound. First, I see that 
\begin{align}
\Gamma(\check{d})-\Gamma(\hat{d}_{ex}) & =\Gamma(\check{d})-\Gamma_{n}(\hat{d}_{ex})+\Gamma_{n}(\hat{d}_{ex})-\Gamma(\hat{d}_{ex})\nonumber \\
 & \leq\Gamma(\check{d})-\Gamma_{n}(\hat{d}_{ex})+\sup_{d\in\mathcal{F}_{\text{comp}}^{\ell}}\left|\Gamma_{n}(d)-\Gamma(d)\right|\label{eq:Thm-3-Derivation}\\
 & \leq\Gamma(\check{d})-\Gamma_{n}(\check{d})+\sup_{d\in\mathcal{F}_{\text{comp}}^{\ell}}\left|\Gamma_{n}(d)-\Gamma(d)\right|\nonumber \\
 & \leq2\sup_{d\in\mathcal{F}_{\text{comp}}^{\ell}}\left|\Gamma_{n}(d)-\Gamma(d)\right|\nonumber 
\end{align}
where used that $\Gamma_{n}(\hat{d}_{ex})\geq\Gamma_{n}(\check{d})$
from the optimality of $\hat{d}_{ex}$ to get to the third line. Since
the bound holds for all $\Gamma(\check{d})$, it also holds for $\Gamma(\check{d}^{*})$.
Thus, I have

\begin{equation}
\Gamma(\check{d}^{*})-\Gamma(\hat{d}_{ex})\leq2\sup_{d\in\mathcal{F}_{\text{comp}}^{\ell}}\left|\Gamma_{n}(d)-\Gamma(d)\right|.\label{eq:Thm-3-Derivation-Second-Line}
\end{equation}

I now need to bound the empirical process term $\sup_{d\in\mathcal{F}_{\text{comp}}^{\ell}}|\Gamma_{n}(d)-\Gamma(d)|$
to complete the proof. From the Assumptions \ref{assu:Unconfoundedness},
\ref{assu:Overlap}, and \ref{assu:SUTVA} and the bounded outcome
variable $(Y)$ assumption, the $\Gamma(d)$ function is bounded with
$||\Gamma(d)||_{\infty}<\bar{\Gamma}$. Assumption \ref{assu:Overlap}
provides strict overlap (for propensity score $e(x)$, I have $\epsilon<e(x)<1-\epsilon,\forall x$
and for some $\epsilon>0$). Lastly, the hypothesis space of comprehensible
policies with a fixed length $l$ has a bounded VC dimension $VC(\mathcal{H})<V<\infty$
from Lemma \ref{lem:Comprehensible-Sentence-Finite-VC-Dim}. 

I then use Lemma A.4 in the supplement of \citet{Kitagawa2018} to
bound the the empirical process term. The lemma adapted to my notation
provides 
\[
E_{P}\left[\sup_{d\in\mathcal{F}_{\text{comp}}^{\ell}}\left|\Gamma_{n}(d)-\Gamma(d)\right|\right]\leq C_{1}\bar{\Gamma}\sqrt{\frac{V}{n}}
\]
for some constant $C_{1}$. Applying this result to Equation \ref{eq:Thm-3-Derivation-Second-Line},
I have that 
\[
E_{p}\left[\Gamma(\check{d})-\Gamma(\hat{d}_{ex})\right]\leq C\sqrt{\frac{V}{n}}.
\]
Then taking the supremum over the possible probability distributions
$P$ that satisfy the assumptions yields,
\[
\sup_{P}E_{p}\left[\Gamma(\check{d})-\Gamma(\hat{d}_{ex})\right]\leq C\sqrt{\frac{V}{n}}=O_{p}(1/\sqrt{n})
\]
as $V$ and $C$ do not depend on $n$.
\end{proof}

\section*{Online Appendix}

\section{Greedy vs. brute force solutions \label{sec:Bounding-the-greedy-algorithm}
}

This section compares the two solutions for finding the optimal comprehensible
policy. I describe worst case bounds for the greedy algorithm in Section
\ref{subsec:Worst-case-bounds} and report the brute force solution's
results for the empirical application in Section \ref{subsec:Brute-force-algorithm-results}.

\subsection{Worst-case bounds for the greedy algorithm\label{subsec:Worst-case-bounds}}

In this section, I construct a worst-case bound for the greedy algorithm
compared to the brute force algorithm by extending results from \citet{Nemhauser1978}
and \citet{Bian2017}. I first rewrite the inverse propensity weighted
(IPWE) profit estimator function to separate the constant from the
term that depends on the policy function $d$,
\begin{align*}
\hat{\Pi}(d) & =\sum_{i=1}^{n}\frac{1-W_{i}}{1-e(x_{i})}\pi_{i}(0)(1-d(x_{i}))+\frac{W_{i}}{e(x_{i})}\pi_{i}(1)d(x_{i})\\
 & =\sum_{i=1}^{n}\frac{1-W_{i}}{1-e(x_{i})}\pi_{i}(0)+\sum_{i=1}^{n}\underbrace{\left(\frac{W_{i}}{e(x_{i})}\pi_{i}(1)-\frac{1-W_{i}}{1-e(x_{i})}\pi_{i}(0)\right)}_{=\omega_{i}}d(x_{i}),
\end{align*}
where $\omega_{i}$ is defined as the weight on $d$ in the IPWE profits
function. The weight represents the difference in profits if individual
$i$ was targeted or not.

I now make the relation of the policy to the targeting sentence explicit.
I define sentence $S$ to be a set of clauses plus logic operators
and the policy function to be $d_{S}(x_{i})$. Maximizing the IPWE
profits to the policy function is equivalent to maximizing the following
objective plus a constant,
\begin{equation}
F(d_{S})=\sum_{i=1}^{n}\omega_{i}d{}_{S}(x_{i}).\label{eq:Worst-bound-Objective-Function}
\end{equation}
I define the each clause (e.g., ``lives in Chicago'') as $v$ and
further define the coverage set $M_{v}=\{i:\text{clause \ensuremath{v} is true for \ensuremath{x_{i}}}\}$. 

First, I consider the specific conditions where $F(d_{S})$ is monotone
and submodular to apply the $(1-1/e)$ worst-case bound from \citet{Nemhauser1978}.
There are two sufficient conditions:
\begin{enumerate}
\item Only or logic operators are used. The sentences targets people who
are covered by any clause in the sentence, so $d_{S}(x_{i})=\mathbf{1}\{i\in\cup_{v\in S}M_{v}\}$.
The other possible logic operators (and, xor, and not) are not considered. 
\item Only non-negative weights ($\omega_{i}\geq0$) are considered.
\end{enumerate}
Under the two sufficient conditions the objective function (Equation
\ref{eq:Worst-bound-Objective-Function}) is a weighted coverage function
and is both monotone and submodular. \citet{Nemhauser1978} implies
in the worst case, the greedy algorithm will recover $(1-1/e)\approx63.2\%$
of the profits of the brute force solution. However, these sufficient
conditions are quite restricting in practice and the second assumption
cannot be verified in the data because both potential outcomes are
not observed for each individual.

Second, I discuss the general case where I optimize over all possible
clauses, include the full set of logic operators, and allow all values
of $\omega_{i}$. \citet{Bian2017} offer a computable bound that
depends on the submodularity of the objective function, 
\[
F(d_{S}^{Greedy})\geq\frac{1}{\alpha}\left(1-e^{-\alpha\gamma}\right)F(d_{S}^{BruteForce}).
\]
The two parameters of the bound are the submodularity ratio $\gamma$
and the curvature of the objective function $\alpha$. Note that when
$\gamma=\alpha=1$, the bound recovers that of \citet{Nemhauser1978}.

I follow the procedure in \citet{Bian2017} and compute the bound
for the optimal comprehensible policy with three clauses. I find $\hat{\alpha}=1.00$
and $\hat{\gamma}=0.714$, indicating the profit objective is relatively
close to being submodular and monotone. These values yield a worst-case
bound of $51.1\%$, meaning the greedy algorithm recovers at least
$51.1\%$ of the brute force solution's profits. This compares relatively
favorably to the $63.2\%$ theoretical worst-case bound that applies
only when objectives are strictly monotone and submodular. The results
suggest that greedy optimization remains effective.

\subsection{Brute force algorithm empirical results \label{subsec:Brute-force-algorithm-results}}

In this section, I report the brute force solutions for the empirical
application. First, the one clause optimal comprehensible policy is
identical to the greedy algorithm, so there is no gap in performance. 

For two clauses, the greedy algorithm yields \$2.730 in sample and
\$2.767 out of sample for individual expected profits. The brute
force algorithm yields \$2.733 in sample and \$2.684 out of sample.
In sample, the brute force algorithm's solution is the upper bound
and the gap between the greedy and brute force solutions is less than
a cent per person. Out of sample, we see that the brute force solution
doing slightly worse than the greedy solution but the difference is
not statistically significant (Figure \ref{fig:Individual-Expected-Profits}
provides the greedy solution's standard errors bands). Together, these
results suggest the greedy solution is doing quite well and achieving
close to the best two clause comprehensible targeting policy. 

For the empirical application, brute force solutions for finding optimal
comprehensible targeting policies with three or more clauses are computationally
intractable because the combinatorial space is too large. For the 150
covariates discretized to three clauses each, there are $6.5\times10^{9}$
combinations to search over for an optimal comprehensible policy with
three clauses.

\section{Profit loss function derivation \label{sec:Profit-loss-derivation}}

This section derives the profit loss function used in Section \ref{sec:Ex-post-policy}.
The individual-level inverse propensity weighted (IPWE) profit loss
estimator for policy function $d(x_{i})$ is

\begin{equation}
\pi(d(x_{i}))=\frac{1-W_{i}}{1-e(x_{i})}\pi_{i}(0)(1-d(x_{i}))+\frac{W_{i}}{e(x_{i})}\pi_{i}(1)d(x_{i}).\label{eq:Individual-Profit-Estimator}
\end{equation}
In this notation, the sample profits are then $\hat{\Pi}(d)=\sum_{i=1}^{n}\pi(d(x_{i}))$
with is an estimate for average population profits $\Pi(d)=E[\pi(d(x_{i}))]$.

The individual-level IPWE of profit difference of two policies $d(x_{i}),d'(x_{i})$
is

\begin{align}
\pi(d(x_{i}))-\pi(d'(x_{i})) & =\left(\frac{1-W_{i}}{1-e(x_{i})}\pi_{i}(0)(1-d(x_{i}))+\frac{W_{i}}{e(x_{i})}\pi_{i}(1)d(x_{i})\right)\label{eq:IPWE-difference-two-policies}\\
 & ~~-\left(\frac{1-W_{i}}{1-e(x_{i})}\pi_{i}(0)(1-d'(x_{i}))-\frac{W_{i}}{e(x_{i})}\pi_{i}(1)d'(x_{i})\right)\nonumber \\
 & =\frac{1-W_{i}}{1-e(x_{i})}\pi_{i}(0)(1-d(x_{i})-(1-d'(x_{i})))+\frac{W_{i}}{e(x_{i})}\pi_{i}(1)(d(x_{i})-d'(x_{i}))\nonumber \\
 & =\frac{1-W_{i}}{1-e(x_{i})}\pi_{i}(0)(d'(x_{i})-d(x_{i}))+\frac{W_{i}}{e(x_{i})}\pi_{i}(1)(d(x_{i})-d'(x_{i}))\nonumber \\
 & =\left(d(x_{i})-d'(x_{i})\right)\left(\frac{W_{i}}{e}\pi_{i}(1)-\frac{1-W_{i}}{1-e(x_{i})}\pi_{i}(0)\right)\nonumber 
\end{align}

To construct a loss function, I consider the absolute difference of
the individual-level IPWE profit differences,
\begin{align*}
|\pi(d(x_{i}))-\pi(d'(x_{i}))| & =\left|(d(x_{i})-d'(x_{i}))(\frac{W_{i}}{e(x_{i})}\pi_{i}(1)-\frac{1-W_{i}}{1-e(x_{i})}\pi_{i}(0))\right|\\
 & =\mathbf{1}\{d(x_{i})\neq d'(x_{i})\}\left|\frac{W_{i}}{e(x_{i})}\pi_{i}(1)-\frac{1-W_{i}}{1-e(x_{i})}\pi_{i}(0)\right|
\end{align*}
where I used the fact that $d(x_{i})$ and $d'(x_{i})$ are indicator
functions to get to the second line. I treat the sample-level difference
as the loss function of interest for two policies $d,d'$
\begin{align*}
\mathcal{L}(d,d') & =|\hat{\Pi}(d)-\hat{\Pi}(d')|\\
 & =\sum_{i=1}^{n}|\pi(d(x_{i}))-\pi(d'(x_{i}))|\\
 & =\sum_{i=1}^{n}\mathbf{1}\{d(x_{i})\neq d'(x_{i})\}\left|\frac{W_{i}}{e(x_{i})}\pi_{i}(1)-\frac{1-W_{i}}{1-e(x_{i})}\pi_{i}(0)\right|.
\end{align*}

\section{Black box implementation details \label{sec:Implementation-Details}}

In this section, I discuss the implementation details for the black
box benchmarks in the empirical application. 

\textbf{Policy DNN}. I implement a DNN architecture of three hidden
layers each with $12$ rectified linear units (ReLU) nodes. I use
the ADAM optimizer with a learning rate of $0.001$ and a scheduler
with the exponential decay parameter of $0.9$ that updates every
200 epochs. I train the Policy DNN for $8700$ epochs. The surrogate
policy function used is $f(z)=\frac{\tanh(z+1)}{2}$. I use 80\% of
the training data to update the DNN and use 20\% of the training data
as the validation data to evaluate the performance of the DNN. Policy
DNN is implemented using the \texttt{torch} package in R.

\textbf{Causal DNN}. I implement a DNN architecture of two hidden
layers each with $30$ ReLU nodes and a dropout rate of $0.5$ for
each layer. I use the ADAM optimizer with a learning rate of $0.0005$
and a scheduler with the exponential decay parameter of $0.9$ that
updates every 200 epochs. I train Causal DNN for $2000$ epochs. I
use 80\% of the training data to update the DNN and use 20\% of the
training data as the validation data to evaluate the performance of
the DNN. Causal DNN is implemented using the \texttt{torch} package
in R.

\textbf{Causal Forest}. I use the \texttt{grf} package in R to implement
the Causal Forest and Policy Tree \citep{Athey2019}. I use $2000$
trees for the Causal Forest.

\textbf{Lasso}. The Lasso is implemented with ten-fold cross validation.
I use the \texttt{gamlr} package in R to implement the Lasso. 

\newpage{}

\section*{Algorithms}

\begin{algorithm}
\caption{Adding an \textquotedblleft and\textquotedblright{} operator and a
clause to a decision tree \label{alg:Add-and}}
\medskip{}

\textbf{Objective}: Add ``and D'' to the comprehensible policy.

\textbf{Setup}: ``and'' is the logic operator and ``D'' is the
new clause being added. Take the decision tree representation of the
comprehensible policy without the addition.
\begin{enumerate}
\item For all terminal nodes in the decision tree that have value $1$,
grow a split on clause D and that have terminal node value $1$ if
clause D is true and $0$ otherwise
\end{enumerate}
\end{algorithm}

\begin{algorithm}
\caption{Adding an \textquotedblright or\textquotedblright{} operator and a
clause to a decision tree\label{alg:Add-or}}

\medskip{}
\textbf{Objective}: Add ``or D'' to the comprehensible policy.

\textbf{Setup}: ``or'' is the logic operator and ``D'' is the
new clause being added. Take the decision tree representation of the
comprehensible policy without the addition.
\begin{enumerate}
\item For all terminal nodes in the decision tree that have value $0$,
grow a split on clause D and that have terminal node value $1$ if
clause D is true and $0$ otherwise
\end{enumerate}
\end{algorithm}

\begin{algorithm}
\caption{Adding a \textquotedblright xor\textquotedblright{} operator and a
clause to a decision tree\label{alg:Add-xor}}

\medskip{}
\textbf{Objective}: Add ``xor D'' to the comprehensible policy.

\textbf{Setup}: ``xor'' is the logic operator and ``D'' is the
new clause being added. Take the decision tree representation of the
comprehensible policy without the addition.
\begin{enumerate}
\item For all terminal nodes in the decision tree that have value $0$,
grow a split on clause D and that have terminal node value $1$ if
clause D is true and $0$ otherwise
\item For all terminal nodes in the decision tree that have value $1$,
grow a split on clause D and that have terminal node value $0$ if
clause D is true and $0$ otherwise
\end{enumerate}
\end{algorithm}

\newpage{}

\section*{Figures}

\begin{figure}[H]
\begin{centering}
\caption{Propensity score density\label{fig:Prop-Score-Overlap}}
\includegraphics[width=0.45\paperwidth]{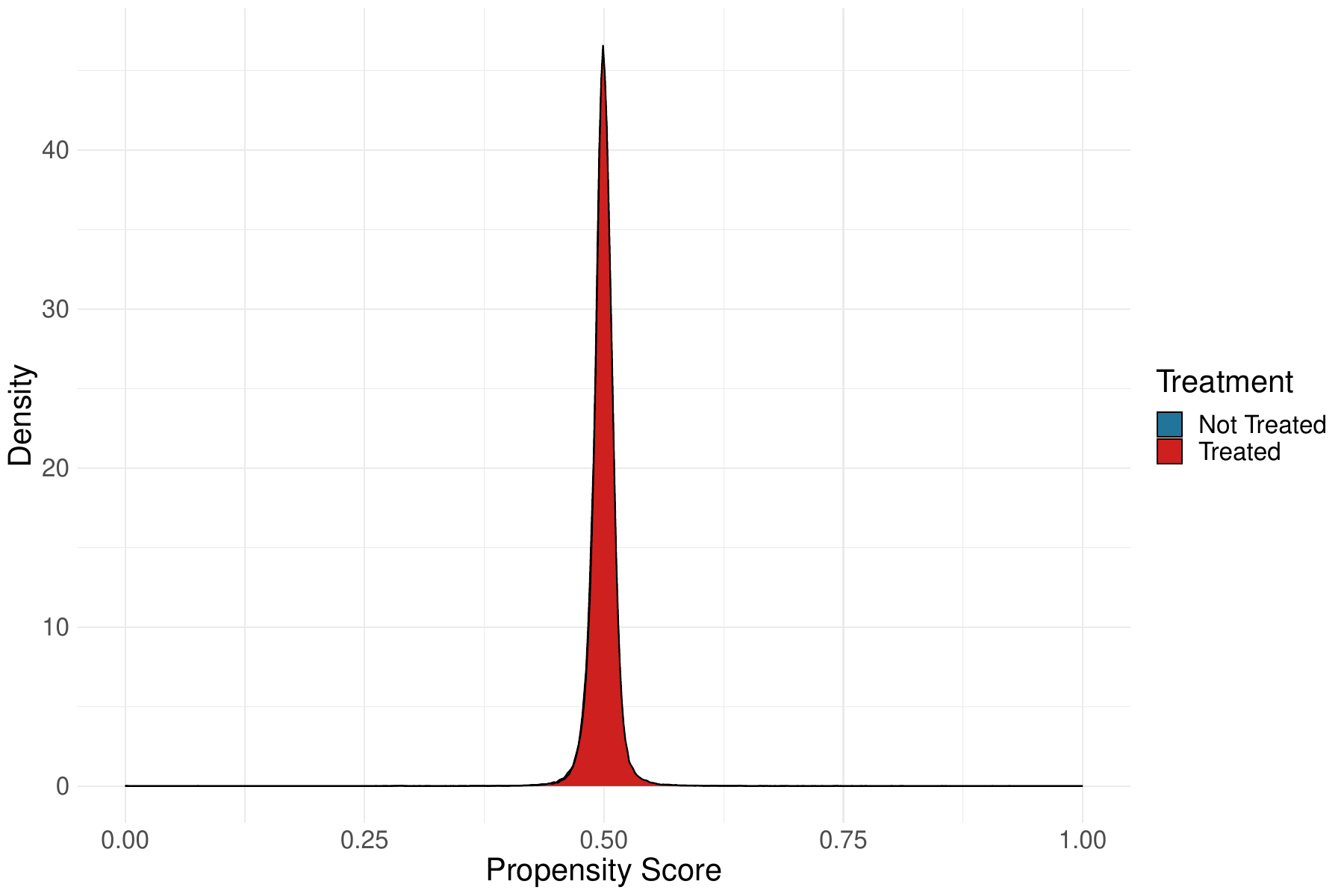}
\par\end{centering}
{\small Note: This figure plots the density of the predicted propensity
scores for the treated and not treated groups. The propensity score
model was estimated using a logistic regression of the treatment indicator
on the RFM covariates in the data set. The two densities essentially
fully overlap with one another, and these results suggest the overlap
assumption holds in the data.}{\small\par}
\end{figure}

\begin{figure}[H]
\caption{Two clause comprehensible policies and decision trees ($\ell=2)$\label{fig:Comprehensible-policies-and-decision-trees-2}}

\begin{centering}
\includegraphics[width=0.8\textwidth,page=1]{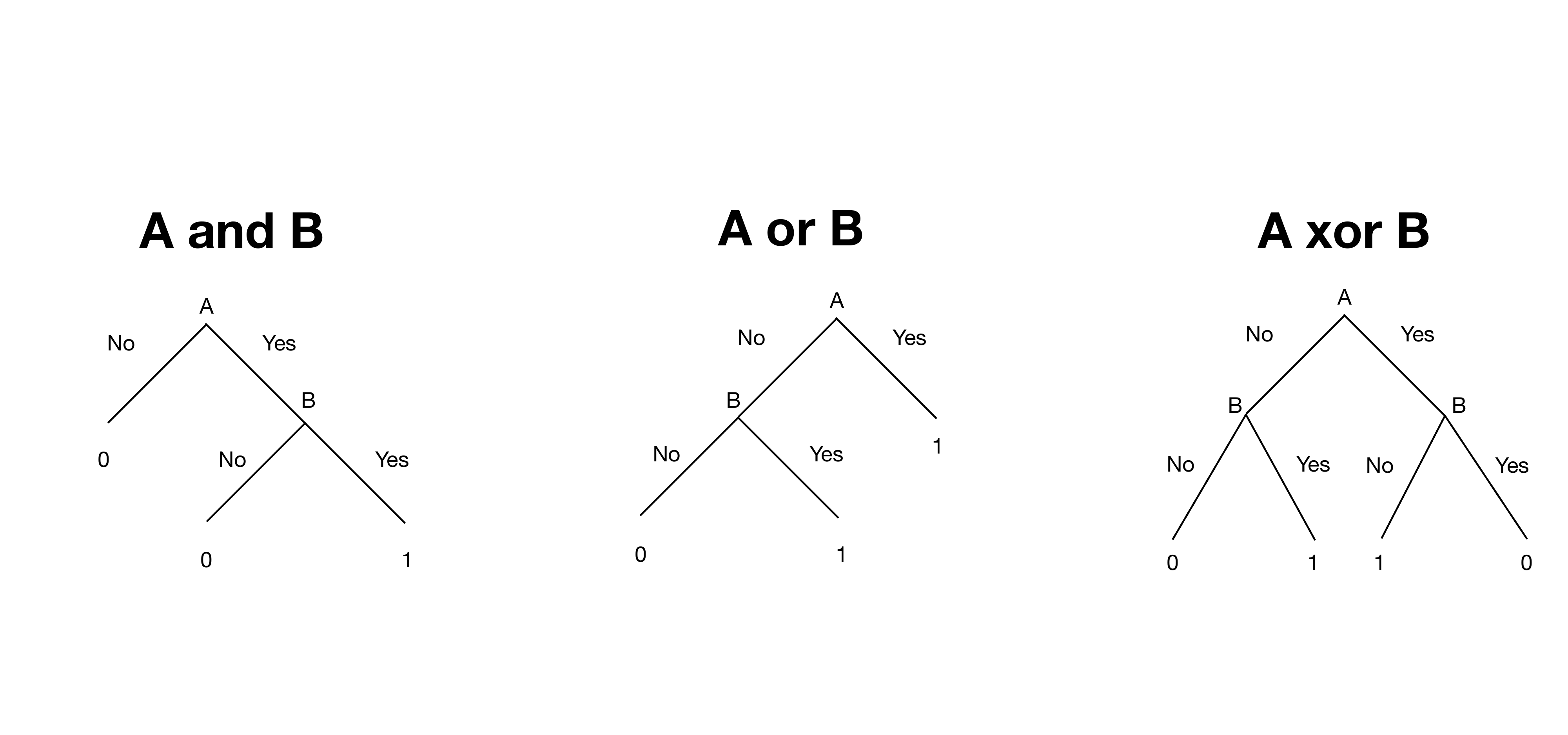}
\par\end{centering}
{\small Note: This figure demonstrates the mapping between a comprehensible
policy with two clauses (A, B) to a decision tree of depth two. The
comprehensible policy on the left states target if ``A and B'' in
which a customer is targeted if the clause A and clause B are both
true and not targeted otherwise. The corresponding decision tree first
has a split whether clause A is true (yes/no) and then conditional
on clause A being true it has another split on whether clause B is
true. The terminal nodes of $1$ indicate targeted and $0$ indicated
not targeted. The comprehensible policy and its respective decision
tree will target and not target the same sets of customers. The other
comprehensible policies and decision trees in the figure are mapped
similarly.}{\small\par}
\end{figure}

\begin{figure}[H]
\caption{Three clause comprehensible policies and decision trees ($\ell=3)$\label{fig:Comprehensible-policies-and-decision-trees-3}}

\begin{centering}
\includegraphics[width=1\textwidth,page=1]{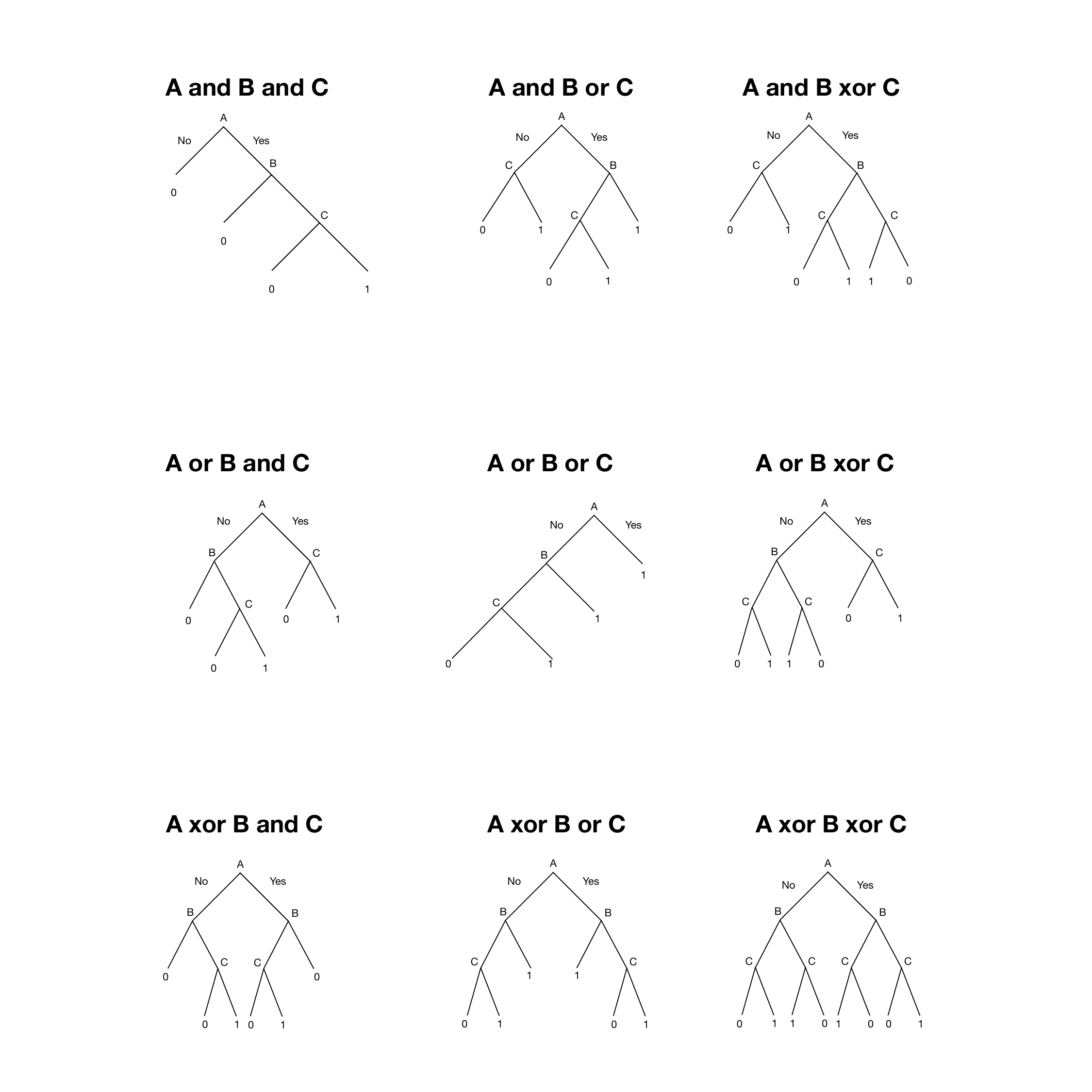}
\par\end{centering}
{\small Note: This figure shows the mapping between a comprehensible
policy with three clauses (A, B, C) to a decision tree of depth three.
The comprehensible policy on the top left states target if ``A and
B and C'' in which a customer is targeted if the clause A, clause
B, and clause C are all true and is not targeted otherwise. The corresponding
decision tree first has a split whether clause A is true (yes/no),
then conditional on clause A being true it has another split on whether
clause B is true, and lastly conditional on A and B being true it
has a split on whether clause C is true. The terminal nodes of $1$
indicate targeted and $0$ indicated not targeted. The comprehensible
policy and its respective decision tree will target and not target
the same sets of customers. The other comprehensible policies and
decision trees in the figure are mapped similarly.}{\small\par}
\end{figure}

\end{document}